\documentclass[pra,aps,showpacs,onecolumn,twoside,superscriptaddress]{revtex4}



\usepackage{amsmath,amsfonts,amssymb,caption,color,epsfig,graphics,graphicx,hyperref,latexsym,mathrsfs,revsymb,theorem,url,verbatim,epstopdf}

\hypersetup{colorlinks,linkcolor={blue},citecolor={red},urlcolor={blue}}

\newtheorem{definition}{Definition}
\newtheorem{proposition}[definition]{Proposition}
\newtheorem{lemma}[definition]{Lemma}

\newtheorem{theorem}[definition]{Theorem}
\newtheorem{corollary}[definition]{Corollary}
\newtheorem{conjecture}[definition]{Conjecture}

\newtheorem{remark}[definition]{Remark}
\newtheorem{example}[definition]{Example}
\newtheorem{question}[definition]{Question}

\def\bcj{\begin{conjecture}}
\def\ecj{\end{conjecture}}
\def\bcr{\begin{corollary}}
\def\ecr{\end{corollary}}
\def\bd{\begin{definition}}
\def\ed{\end{definition}}
\def\bea{\begin{eqnarray}}
\def\eea{\end{eqnarray}}
\def\bem{\begin{enumerate}}
\def\eem{\end{enumerate}}
\def\bex{\begin{example}}
\def\eex{\end{example}}
\def\bim{\begin{itemize}}
\def\eim{\end{itemize}}
\def\bl{\begin{lemma}}
\def\el{\end{lemma}}
\def\bma{\begin{bmatrix}}
\def\ema{\end{bmatrix}}
\def\bpf{\begin{proof}}
\def\epf{\end{proof}}
\def\bpp{\begin{proposition}}
\def\epp{\end{proposition}}
\def\bqu{\begin{question}}
\def\equ{\end{question}}
\def\br{\begin{remark}}
\def\er{\end{remark}}
\def\bt{\begin{theorem}}
\def\et{\end{theorem}}


\def\squareforqed{\hbox{\rlap{$\sqcap$}$\sqcup$}}
\def\qed{\ifmmode\squareforqed\else{\unskip\nobreak\hfil
\penalty50\hskip1em\null\nobreak\hfil\squareforqed
\parfillskip=0pt\finalhyphendemerits=0\endgraf}\fi}
\def\endenv{\ifmmode\;\else{\unskip\nobreak\hfil
\penalty50\hskip1em\null\nobreak\hfil\;
\parfillskip=0pt\finalhyphendemerits=0\endgraf}\fi}
\newenvironment{proof}{\noindent \textbf{{Proof.~} }}{\qed}
\def\Dbar{\leavevmode\lower.6ex\hbox to 0pt
{\hskip-.23ex\accent"16\hss}D}
\makeatletter
\def\url@leostyle{%
  \@ifundefined{selectfont}{\def\UrlFont{\sf}}{\def\UrlFont{\small\ttfamily}}}
\makeatother
\urlstyle{leo}

\def\bcj{\begin{conjecture}}
\def\ecj{\end{conjecture}}
\def\bcr{\begin{corollary}}
\def\ecr{\end{corollary}}
\def\bd{\begin{definition}}
\def\ed{\end{definition}}
\def\bea{\begin{eqnarray}}
\def\eea{\end{eqnarray}}
\def\bem{\begin{enumerate}}
\def\eem{\end{enumerate}}
\def\bex{\begin{example}}
\def\eex{\end{example}}
\def\bim{\begin{itemize}}
\def\eim{\end{itemize}}
\def\bl{\begin{lemma}}
\def\el{\end{lemma}}
\def\bpf{\begin{proof}}
\def\epf{\end{proof}}
\def\bpp{\begin{proposition}}
\def\epp{\end{proposition}}
\def\bqu{\begin{question}}
\def\equ{\end{question}}
\def\br{\begin{remark}}
\def\er{\end{remark}}
\def\bt{\begin{theorem}}
\def\et{\end{theorem}}

\def\btb{\begin{tabular}}
\def\etb{\end{tabular}}

\newcommand{\nc}{\newcommand}


\def\a{\alpha}
\def\b{\beta}
\def\g{\gamma}
\def\d{\delta}
\def\e{\epsilon}

\def\z{\zeta}

\def\t{\theta}

\def\l{\lambda}
\def\m{\mu}
\def\n{\nu}
\def\x{\xi}
\def\p{\pi}
\def\r{\rho}
\def\s{\sigma}

\def\ph{\varphi}

\def\ps{\psi}
\def\o{\omega}

\def\G{\Gamma}

 \nc{\bbA}{\mathbb{A}} \nc{\bbB}{\mathbb{B}} \nc{\bbC}{\mathbb{C}}
 \nc{\bbD}{\mathbb{D}} \nc{\bbE}{\mathbb{E}} \nc{\bbF}{\mathbb{F}}
 \nc{\bbG}{\mathbb{G}} \nc{\bbH}{\mathbb{H}} \nc{\bbI}{\mathbb{I}}
 \nc{\bbJ}{\mathbb{J}} \nc{\bbK}{\mathbb{K}} \nc{\bbL}{\mathbb{L}}
 \nc{\bbM}{\mathbb{M}} \nc{\bbN}{\mathbb{N}} \nc{\bbO}{\mathbb{O}}
 \nc{\bbP}{\mathbb{P}} \nc{\bbQ}{\mathbb{Q}} \nc{\bbR}{\mathbb{R}}
 \nc{\bbS}{\mathbb{S}} \nc{\bbT}{\mathbb{T}} \nc{\bbU}{\mathbb{U}}
 \nc{\bbV}{\mathbb{V}} \nc{\bbW}{\mathbb{W}} \nc{\bbX}{\mathbb{X}}
 \nc{\bbZ}{\mathbb{Z}}


 \nc{\bA}{{\bf A}} \nc{\bB}{{\bf B}} \nc{\bC}{{\bf C}}
 \nc{\bD}{{\bf D}} \nc{\bE}{{\bf E}} \nc{\bF}{{\bf F}}
 \nc{\bG}{{\bf G}} \nc{\bH}{{\bf H}} \nc{\bI}{{\bf I}}
 \nc{\bJ}{{\bf J}} \nc{\bK}{{\bf K}} \nc{\bL}{{\bf L}}
 \nc{\bM}{{\bf M}} \nc{\bN}{{\bf N}} \nc{\bO}{{\bf O}}
 \nc{\bP}{{\bf P}} \nc{\bQ}{{\bf Q}} \nc{\bR}{{\bf R}}
 \nc{\bS}{{\bf S}} \nc{\bT}{{\bf T}} \nc{\bU}{{\bf U}}
 \nc{\bV}{{\bf V}} \nc{\bW}{{\bf W}} \nc{\bX}{{\bf X}}
 \nc{\bZ}{{\bf Z}}


\nc{\cA}{{\cal A}} \nc{\cB}{{\cal B}} \nc{\cC}{{\cal C}}
\nc{\cD}{{\cal D}} \nc{\cE}{{\cal E}} \nc{\cF}{{\cal F}}
\nc{\cG}{{\cal G}} \nc{\cH}{{\cal H}} \nc{\cI}{{\cal I}}
\nc{\cJ}{{\cal J}} \nc{\cK}{{\cal K}} \nc{\cL}{{\cal L}}
\nc{\cM}{{\cal M}} \nc{\cN}{{\cal N}} \nc{\cO}{{\cal O}}
\nc{\cP}{{\cal P}} \nc{\cQ}{{\cal Q}} \nc{\cR}{{\cal R}}
\nc{\cS}{{\cal S}} \nc{\cT}{{\cal T}} \nc{\cU}{{\cal U}}
\nc{\cV}{{\cal V}} \nc{\cW}{{\cal W}} \nc{\cX}{{\cal X}}
\nc{\cZ}{{\cal Z}}


\nc{\hA}{{\hat{A}}} \nc{\hB}{{\hat{B}}} \nc{\hC}{{\hat{C}}}
\nc{\hD}{{\hat{D}}} \nc{\hE}{{\hat{E}}} \nc{\hF}{{\hat{F}}}
\nc{\hG}{{\hat{G}}} \nc{\hH}{{\hat{H}}} \nc{\hI}{{\hat{I}}}
\nc{\hJ}{{\hat{J}}} \nc{\hK}{{\hat{K}}} \nc{\hL}{{\hat{L}}}
\nc{\hM}{{\hat{M}}} \nc{\hN}{{\hat{N}}} \nc{\hO}{{\hat{O}}}
\nc{\hP}{{\hat{P}}} \nc{\hR}{{\hat{R}}} \nc{\hS}{{\hat{S}}}
\nc{\hT}{{\hat{T}}} \nc{\hU}{{\hat{U}}} \nc{\hV}{{\hat{V}}}
\nc{\hW}{{\hat{W}}} \nc{\hX}{{\hat{X}}} \nc{\hZ}{{\hat{Z}}}

\nc{\hn}{{\hat{n}}}



























\def\dim{\mathop{\rm Dim}}


\def\lin{\mathop{\rm span}}


\def\min{\mathop{\rm min}}


\def\rank{\mathop{\rm rank}}



\def\tr{\mathop{\rm Tr}}



\def\dg{\dagger}

\def\ox{\otimes}

\def\ra{\rightarrow}

\newcommand{\bra}[1]{\langle#1|}
\newcommand{\ket}[1]{|#1\rangle}
\newcommand{\proj}[1]{| #1\rangle\!\langle #1 |}
\newcommand{\ketbra}[2]{|#1\rangle\!\langle#2|}


















\def\Dbar{\leavevmode\lower.6ex\hbox to 0pt
{\hskip-.23ex\accent"16\hss}D}

\begin{document}
\title{Schmidt number of bipartite and multipartite states under local projections}

\title{Tripartite genuinely entangled states from entanglement-breaking subspaces}

\date{\today}

\pacs{03.65.Ud, 03.67.Mn}

\author{Yize Sun}\email[]{sunyize@buaa.edu.cn}
\affiliation{School of Mathematical Sciences, Beihang University, Beijing 100191, China}

\author{Lin Chen}\email[]{linchen@buaa.edu.cn (corresponding author)}
\affiliation{School of Mathematical Sciences, Beihang University, Beijing 100191, China}
\affiliation{International Research Institute for Multidisciplinary Science, Beihang University, Beijing 100191, China}

\begin{abstract}
The determination of genuine  entanglement is a central problem in quantum information processing. We investigate the tripartite state as the tensor product of two bipartite entangled states by merging two systems. We show that the tripartite state is a genuinely entangled state when the range of both bipartite states are entanglement-breaking subspaces. We further investigate the tripartite state when one of the two bipartite states has rank two. Our results provide the latest progress on a conjecture proposed in the paper [Yi Shen \textit{et al}, J. Phys. A 53, 125302 (2020)]. We apply our results to construct multipartite states whose bipartite reduced density operators have additive EOF. Further, such states are distillable across every bipartition under local operations and classical communications. 
\end{abstract}

\Large

\maketitle


\section{Introduction}
\label{sec:int}

Various quantum-information tasks requires genuine entanglement as an indispensable ingredient. They include the quantum key distribution \cite{das2019universal}, measurement-based quantum computation \cite{Briegel2009Measurement,RaussendorfA}, communication \cite{DeQuantum} and one-dimensional cluster-Ising model \cite{Giampaolo_2014}. The detection and construction of genuine entanglement has a received extensive attentions in quantum information both theoretically and experimentally in recent years. They have proposed methods via entanglement witnesses  \cite{sun2019improved,HuberDetection,HuberWitnessing,DeMultipartite,Huber2013Entropy,SperlingMultipartite,JungnitschTaming,CoffmanDistributed}, generalized concurrence  \cite{MaMeasure,ChenImproved,Hong2012Measure,Gao2014On}, Bell inequalities \cite{BancalDevice}, geometric measure \cite{roy2019computable} and semidefinite programming \cite{Lancien_2015}. Nevertheless, it is still generally hard to determine whether the multipartite state is a genuinely entangled (GE) state or not. The latter is also known as the biseparable state. Recently, constructing tripartite GE states via the tensor product of two bipartite entangled states has been proposed \cite{Shen_2020}, see also Conjecture \ref{cj:aotimesb}. The method offers the following advantage. The construction and detection of bipartite entanglement is a more operational task than that of multipartite entanglement \cite{HuberDetection}, and many methods such as the entanglement witnesses have been developed in the past decades \cite{VanMultipartite,WuQuantum}. Hence, constructing tripartite GE states using bipartite entangled states is a convenient and efficient method. In this paper, we investigate the following conjecture on genuine entanglement.
\begin{conjecture}
\label{cj:aotimesb}
If $\a_{AC_1},\b_{BC_2}$ are both bipartite entangled states, then $\a_{AC_1}\otimes\b_{BC_2}$ is a tripartite GE state on the Hilbert space $\cH_{ABC}$ with $C=C_1C_2$.
\end{conjecture}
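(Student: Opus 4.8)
\medskip
\noindent\textbf{A plan of attack.}
The plan is to argue by contradiction. Suppose $\sigma:=\a_{AC_1}\ox\b_{BC_2}$ is biseparable on $\cH_{ABC}$ with $C=C_1C_2$. The biseparable states form the convex hull of the pure states that are product across at least one of the bipartitions $A|BC$, $B|AC$, $C|AB$, so one may write $\sigma=\sum_k p_k\proj{\psi_k}$ with $p_k>0$ and each $\ket{\psi_k}$ product across one of those cuts. Since $\sigma=\a\ox\b$ is a product state across the cut $AC_1|BC_2$, its range equals $\cR(\a)\ox\cR(\b)$, whence $\ket{\psi_k}\in\cR(\a)\ox\cR(\b)$ for all $k$. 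I would also record at the outset that $\sigma$ is entangled across \emph{each} of the three bipartitions: tracing out $BC_2$, $AC_1$, resp.\ $B$ and $C_2$, turns a separable decomposition of $\sigma$ across $A|BC$, $B|AC$, resp.\ $C|AB$ into a separable decomposition of $\a$ across $A|C_1$ or of $\b$ across $B|C_2$, which contradicts the hypothesis that $\a$ and $\b$ are entangled.

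The technical heart will be a lemma on which vectors of $\cR(\a)\ox\cR(\b)$ can be product across a prescribed cut. Given such a vector, write $\ket{\psi}=\sum_j\ket{u_j}_{AC_1}\ket{v_j}_{BC_2}$ with the $\ket{v_j}\in\cR(\b)$ linearly independent and the $\ket{u_j}\in\cR(\a)$ (obtained by contracting $\ket{\psi}$ with $\bra{v_j}$ on $BC_2$), and then contract with an arbitrary $\bra{\chi}$ on $C_1$; linear independence of the $\ket{v_j}$ forces the following. If $\ket{\psi}=\ket{a}_A\ox\ket{\phi}_{BC}$ is product across $A|BC$, then every $\ket{u_j}$ factors as $\ket{a}_A\ox\ket{c_j}_{C_1}$; in particular $\cR(\a)$ contains a product vector with first factor $\ket{a}$. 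By symmetry a vector product across $B|AC$ forces a product vector in $\cR(\b)$, and a vector product across $C|AB$ is product across $AB|C$, so its reduced state on $AC_1$ is a product state across $A|C_1$.

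These facts settle the two extreme regimes. If $\cR(\a)$ and $\cR(\b)$ are completely entangled subspaces (no product vectors at all), then no $\ket{\psi_k}$ can be product across $A|BC$ or $B|AC$, so every $\ket{\psi_k}$ is product across $C|AB$; but then $\sigma$ is separable across $C|AB$, contradicting its entanglement across that cut. At the other extreme, if $\cR(\a)$ and $\cR(\b)$ are entanglement-breaking subspaces --- each spanned by product vectors, carrying a product structure that ties $A$ to $C_1$ and $B$ to $C_2$ --- the idea is to work in an operator basis adapted to that structure: since $\a$ and $\b$ are entangled they fail to be block-diagonal there, and this pins into $\sigma=\a\ox\b$ a nonzero ``cross'' matrix element that simultaneously couples two different $A$-components and two different $B$-components; the structural lemma, combined with the rigidity of product vectors that the entanglement-breaking hypothesis provides, confines each admissible $\ket{\psi_k}$ (according to which cut it is product across) to operators that cannot carry such a term, and no convex combination of them can either --- contradiction. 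The rank-two case should run along the same dichotomy: a two-dimensional range $\cR(\b)$ meets the product vectors in a very small set (it cannot contain a pencil $\ket{b}_B\ox W$ or $W\ox\ket{d}_{C_2}$, else $\b$ would be separable), which again confines the possible $\ket{\psi_k}$ enough to push the bookkeeping through in the cases one can control.

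I expect the main obstacle to be the intermediate regime --- when $\cR(\a)$ or $\cR(\b)$ is a proper subspace whose product vectors form a positive-dimensional family that is not a pencil through a fixed local vector. There a product vector of $\cR(\a)$ need no longer be rigid: it may spread across several components of any product basis, so the confinement used above collapses, and the biseparable pure states inside $\cR(\a)\ox\cR(\b)$ become hard to classify. A complete proof would need that classification together with an argument that no convex combination of those pure states can reproduce the correlations inherent in $\a\ox\b$ --- or, failing that, a counterexample living precisely in this regime.
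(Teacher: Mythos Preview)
The statement you are attacking is labeled a \emph{conjecture} in the paper and is not proved there in full generality; the paper establishes only special cases (its Theorem~\ref{thm:main} and the previously known Lemma~\ref{le:2r2oxr2}). So there is no complete proof in the paper to match against, and your own final paragraph correctly identifies the genuine obstruction: the intermediate regime where $\cR(\a)$ (or $\cR(\b)$) is spanned by product vectors but those product vectors are not rigid in the way an EB subspace of the specific form $\lin\{\ket{a_i,i}\}$ forces. That is exactly where the paper stops as well.

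For the special cases the paper does handle, your route differs from theirs. The paper's Theorem~\ref{thm:main}(ii) treats $\cR(\a),\cR(\b)$ lying inside EB spaces of the form $\lin\{\ket{a_i,i}\}$, $\lin\{\ket{b_j,j}\}$ not by a direct ``cross matrix element'' argument as you sketch, but by proving that each of $\a,\b$ can be \emph{locally projected} onto an entangled state of rank at most two, and then invoking the known rank-$\le2$ case (Lemma~\ref{le:2r2oxr2}(i),(ii)). This is clean and modular: it reduces a whole family to one base case already settled elsewhere. Your matrix-element outline is plausible for precisely that class of EB subspaces (the $C_1$-label pins the $A$-factor, so product vectors in $\cR(\a)$ are rigid), but you have not carried it through, and it is not clear it extends to EB subspaces in the paper's more general sense. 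Likewise, the paper does not prove the conjecture for ``rank-two $\a$ against arbitrary $\b$''; its Lemma~\ref{le:rank2} only derives structural consequences under the assumption that $\a\ox_{K_c}\b$ is biseparable, so your rank-two sketch is attempting more than the paper claims.

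Two smaller remarks. Your argument that $\s=\a\ox\b$ is entangled across each of $A|BC$, $B|AC$, $C|AB$ is correct; the paper uses the same partial-trace device to absorb the $\z_{ABC}$ term in its rank-two analysis (Lemma~\ref{le:rank2}(i)). And your completely-entangled-subspace case is fine but is subsumed by the stronger fact the paper cites (Lemma~\ref{le:2r2oxr2}(iii)): the conjecture already holds whenever \emph{one} of $\cR(\a),\cR(\b)$ is merely not spanned by product vectors, not only when it contains none at all.
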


In the conjecture, we may assume that a tensor product $\r_{ABC}=\a_{AC_1}\otimes\b_{BC_2}$ on the Hilbert space $\mathcal{H}_{ABC}$  
with a decomposition of system $C$ as $\mathcal{H}_{C}=\mathcal{H}_{C_1}\otimes\mathcal{H}_{C_2}$.

We have shown that the counterexample to Conjecture \ref{cj:aotimesb} might exist only if the range of $\a_{AC_1}$ and $\b_{BC_2}$ are both spanned by product vectors \cite{Shen_2020}. As far as we know, the relation between GE states and entanglement-breaking (EB) space is little studied yet. The EB space \cite{PhysRevLett.89.027901,Zhao_2019} is a bipartite subspace that offers advantage for the additivity of entanglement of formation (EOF) \cite{Zhao_2019}.  In the past years, the EOF has many applications  such as the measure of entanglement for two-qubit pure states, many-body systems and quantifying the amount of entanglement in \cite{PhysRevA.54.3824,PhysRevLett.85.2625, PhysRevLett.95.210501,PhysRevLett.121.190503,Zhu2010Additivity,0034-4885-81-7-074002, 1751-8121-46-39-395302, PhysRevLett.91.107901}. 
The aim of constructing EB spaces is to connect the entanglement cost and EOF. The entanglement cost is a physically motivated entanglement measure quantifying the least entanglement required to form a bipartite state asymptotically \cite{RevModPhys.81.865}.  Because the entanglement cost is exactly the regularized form of EOF \cite{Hayden2000The}, it is notoriously difficult to derivate the entanglement cost of a bipartite state unless the state has additive EOF.  Actually, if a state has additive EOF, then it makes the equality between the entanglement cost and EOF of the state. However,  the relation between additive EOF and GE states is not well studied.



In this paper we consider the case that the product vectors form an EB space namely $\ket{a_1,1},...,\ket{a_n,n}$ up to equivalence under stochastic local operations and classical communications (SLOCC). This is presented in Theorem \ref{thm:main}. Then we investigate the tensor product of a rank-two bipartite state $\a$ whose range is the simplest entanglement-breaking subspace, and an arbitrary bipartite state $\b$. The range of $\a$ is spanned by $\ket{0,0}$ and $\ket{1,1}$ up to equivalence, as we show in Lemma \ref{le:2r2oxr2}. In Lemma \ref{le:rank2}, we characterize the properties of $\a$ and $\b$ by assuming that the tensor product is a tripartite biseparable state.
Next, we apply our results to two types of constructing multipartite GE states. The first family of multipartite GE states have bipartite reduced density operators with additive EOF. The second family of multipartite GE states whose every bipartition produces a distillable state under LOCC. Such multipartite states are of widely usefulness in theory and experiment for various quantum-information tasks. Then by extending Conjecture \ref{cj:aotimesb}, we explore more ways to construct multipartite GE states. Moreover, we reveal connections between Conjecture \ref{cj:aotimesb} and Conjecture \ref{cj:aotimesbotimesc=sep} in Lemma \ref{thm:bi->tri}. 

The rest of this paper is organized as follows. In Sec. \ref{sec:pr} we introduce the preliminary facts and notations used in this paper.  In Sec.
\ref{sec:ebOXeb} we investigate the tensor product of two entanglement-breaking subspaces. We further study the tensor product of the simplest entanglement-breaking subspace and an arbitrary bipartite subspace. We apply our results in Sec. \ref{sec:app}. Furthermore we discuss our results in the multipartite space in Sec. \ref{sec:multiGE}. Finally we conclude in Sec. \ref{sec:con}.

\section{Preliminaries}
\label{sec:pr}

In this section we introduce the notions and facts used in this paper. Suppose $\rho_{A_1A_2\cdots A_n}$ is an $n$-partite state on the Hilbert space $\cH_{A_1A_2\cdots A_n}:=\cH_{A_1}\ox\cH_{A_2}\ox\cdots\ox\cH_{A_n}$. Denote $\rho_{A_1A_2\cdots A_n}$ by $\rho$ for simplicity, and denote by $\rho_{A_{j_1}A_{j_2}\cdots A_{j_k}}$ the reduced density matrices of $\rho$ of system $A_{j_1}A_{j_2}\cdots A_{j_k}$. Unless stated otherwise, we shall not normalize quantum states for convenience. For $\rho=\sum_{j=1}^k\proj{\psi_j}$, we denote by $\cR(\rho)$ the range of $\rho$. So we have $\cR(\rho)=\lin\{\ket{\psi_j}\}_{j=1}^k$. Next, we say that two $n$-partite states  $\a$ and $\b$ are locally equivalent when there exists a product invertible operation $X=X_1\ox...\ox X_n$
such that $\a=X\b X^\dg $. In this case, we also say that $\a$ and $\b$ are equivalent under stochastic local operations and classical communications (SLOCC) \cite{DThree}. Physically, $\a$ and $\b$ can be converted each other with some nonzero probability. In particular the probability reaches one when $X$ is unitary. 

In the following we review the entanglement-breaking (EB) space proposed in the paper \cite{PhysRevLett.89.027901} and recently studied in \cite{Zhao_2019}. Suppose that $A, B, a, b$ are four quantum systems, and $V$ a bipartite subspace of $\mathcal{H}_{AB}$ with the following property. Given
a bipartite pure state $\ket{\psi}\in V\otimes\mathcal{H}_{ab}$, tracing out system $B$ destroys the entanglement between $AB$ with $ab$. In other word the bipartite state of system $A$ and $ab$ is separable. Then we refer to such $V$ as an EB space, see Figure \ref{fig:eb}. Next, we introduce a fact from \cite{PhysRevLett.89.027901} as Lemma \ref{le:eb=additive=eof}.

\begin{figure}[htb]
	\includegraphics[scale=0.3,angle=0]{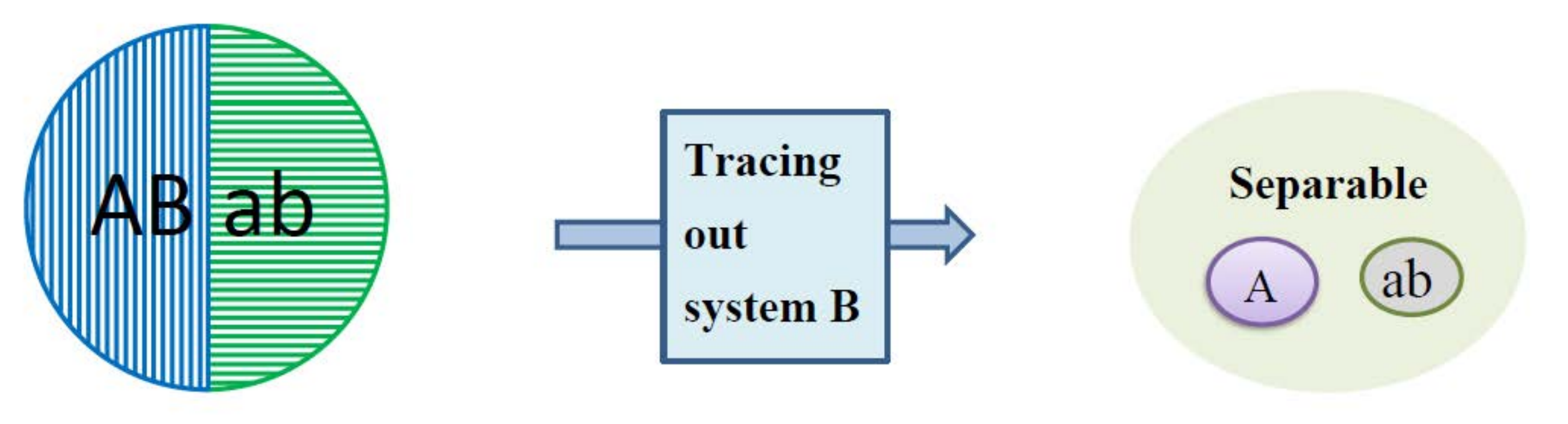}
	\caption{The description of an entanglement-breaking (EB) space.}
	\label{fig:eb}
\end{figure}

\begin{lemma}
	\label{le:eb=additive=eof}	
	The bipartite state whose range is an EB space has additive entanglement of formation (EOF). 
\end{lemma}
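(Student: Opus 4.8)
Additivity of the entanglement of formation (EOF) asserts that $E_F(\rho_{AB}\ox\sigma_{ab})=E_F(\rho_{AB})+E_F(\sigma_{ab})$ for every state $\sigma_{ab}$, where $\cR(\rho_{AB})$ lies in the EB space $V$ and the bipartition of the left-hand side is $Aa\,|\,Bb$; the plan is to establish the two inequalities separately. In particular this yields $E_F(\rho_{AB}^{\ox n})=nE_F(\rho_{AB})$, so that the entanglement cost of $\rho_{AB}$ equals its EOF. The inequality $E_F(\rho_{AB}\ox\sigma_{ab})\le E_F(\rho_{AB})+E_F(\sigma_{ab})$ is immediate and uses nothing about $V$: tensoring an optimal convex-roof decomposition of $\rho_{AB}$ with one of $\sigma_{ab}$ gives a pure-state decomposition of $\rho_{AB}\ox\sigma_{ab}$ whose average entropy of entanglement equals $E_F(\rho_{AB})+E_F(\sigma_{ab})$. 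So the content is the superadditive direction.

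For superadditivity I would fix an optimal decomposition $\rho_{AB}\ox\sigma_{ab}=\sum_k p_k\proj{\psi_k}$ across $Aa\,|\,Bb$. Since $\cR(\rho_{AB}\ox\sigma_{ab})=\cR(\rho_{AB})\ox\cR(\sigma_{ab})\subseteq V\ox\mathcal{H}_{ab}$, each $\ket{\psi_k}$ lies in $V\ox\mathcal{H}_{ab}$. Tracing out $ab$ shows that $\{p_k,(\psi_k)_{AB}\}$ is an ensemble for $\rho_{AB}$ with every $(\psi_k)_{AB}$ supported on $V$; tracing out $AB$ shows that $\{p_k,(\psi_k)_{ab}\}$ is an ensemble for $\sigma_{ab}$. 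By convexity of the EOF it then suffices to prove the per-component inequality
\[
S(\phi_{Bb})\ \ge\ E_F(\phi_{AB})_{A|B}+E_F(\phi_{ab})_{a|b}
\]
for every pure $\ket\phi_{AaBb}$ with $\cR(\phi_{AB})\subseteq V$: applying it to each $\ket{\psi_k}$, multiplying by $p_k$, summing, and using the two ensembles above turns it into $E_F(\rho_{AB}\ox\sigma_{ab})\ge E_F(\rho_{AB})+E_F(\sigma_{ab})$.

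The per-component inequality is where the defining property of the EB space enters: because $\ket\phi\in V\ox\mathcal{H}_{ab}$, tracing out $B$ leaves $\phi_{Aab}$ separable across $A\,|\,ab$, i.e.\ $E_F(\phi_{Aab})_{A|ab}=0$; equivalently, in the block form $V=\bigoplus_i W_i\ox\ket{i}_B$ of an EB space one has $\ket\phi=\sum_i\ket{i}_B\ox\ket{\eta_i}_{Aab}$ with each $(\eta_i)_A$ supported on $W_i$. I would then combine this structure with Koashi--Winter-type identities, which rewrite each EOF as an entropy minus a Henderson--Vedral classical correlation, and with the monotonicity of entanglement under local operations on the $Bb$ side, so as to account for the entanglement of $\phi_{ab}$ across $a\,|\,b$ without sacrificing that of $\phi_{AB}$ across $A\,|\,B$. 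The main obstacle is exactly this balancing: the naive attempt---measuring the $B$ register outright to realize the separable decomposition of $\phi_{Aab}$ on the $A\,|\,ab$ cut---collapses system $A$ to a pure state and kills precisely the $E_F(\phi_{AB})$ term, so a finer entropic accounting is needed. This step is the technical heart of the statement; it is carried out in \cite{PhysRevLett.89.027901}, and granting it the lemma follows.
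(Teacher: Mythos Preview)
The paper does not actually prove this lemma: it is introduced as ``a fact from \cite{PhysRevLett.89.027901}'' with no accompanying argument. Your proposal therefore already goes beyond what the paper does, since you lay out the standard reduction---subadditivity is automatic; for superadditivity one fixes an optimal decomposition of $\rho\ox\sigma$, reduces to a per-component inequality, and invokes convexity of $E_F$ on the induced ensembles for $\rho$ and $\sigma$---and you correctly isolate the point at which the EB hypothesis must enter (namely, that $\phi_{Aab}$ is separable across $A\mid ab$), before deferring the remaining entropic estimate to the same reference. In that sense your approach and the paper's coincide: both ultimately rest on \cite{PhysRevLett.89.027901} for the substantive step.

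Two remarks on the sketch itself. First, the reduction to the per-component inequality $S(\phi_{Bb})\ge E_F(\phi_{AB})+E_F(\phi_{ab})$ is a valid sufficient condition (a strong-superadditivity statement), and your bookkeeping with the two induced ensembles is correct. Second, the parenthetical claim that an EB space has ``the block form $V=\bigoplus_i W_i\ox\ket{i}_B$'' is inaccurate on two counts: with $\dim W_i\ge 2$ such a direct sum is \emph{not} EB (take $\ket{\psi}=\ket{1}_B\ox(\ket{v_1,0}+\ket{v_2,1})_{A,ab}$ with $\ket{v_1},\ket{v_2}\in W_1$ orthonormal, and $\tr_B$ leaves a pure entangled state on $A\mid ab$), and even with one-dimensional $W_i$ this form does not exhaust all EB subspaces, as \cite{Zhao_2019} shows. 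Since your main line uses only the defining separability of $\phi_{Aab}$ and not the block decomposition, this slip does not break the argument, but the word ``equivalently'' should be dropped.
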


Recall that the EOF of a bipartite mixed state $\r$ is defined as 
$
E_f(\r):=\min_{\r=\sum_ip_i\proj{\psi_i}}\sum_ip_iE(\ket{\psi_i})
$. 
Physically, the EOF of $\r$ means the minimum entanglement
required for realizing the single copy of $\r$. The EOF of any state in an  optimal decomposition family is computable \cite{VollbrechtEntanglement}. In addition, the optimal decomposition is closely connected the additivity of EOF of $\r$  \cite{Chen2007Entanglement}. From the definition of additivity of EOF, it holds that 
$E_f(\r\otimes\s)=E_f(\r)+E_f(\s)$ for any bipartite state $\s$. Then we introduce a fact from \cite{Chen2007Entanglement} as follows.

\begin{lemma}
	\label{le:eof}
	If the EOF of two states are additive, then the EOF of their tensor product is additive. 
\end{lemma}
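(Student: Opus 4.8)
The plan is to unwind the definition of additivity of EOF and then chain the two hypotheses together. Write $\rho$ and $\sigma$ for the two bipartite states whose EOF is additive, so that by definition $E_f(\rho\otimes\xi)=E_f(\rho)+E_f(\xi)$ and $E_f(\sigma\otimes\xi)=E_f(\sigma)+E_f(\xi)$ hold for every bipartite state $\xi$ (on any extra pair of subsystems). The goal is to show $E_f\big((\rho\otimes\sigma)\otimes\tau\big)=E_f(\rho\otimes\sigma)+E_f(\tau)$ for an arbitrary bipartite state $\tau$, which is precisely the statement that $\rho\otimes\sigma$ has additive EOF.

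First I would record the routine observation that $E_f$ is invariant under a simultaneous regrouping (or permutation) of the tensor factors on the $A$-side and on the $B$-side, since such a regrouping is implemented by a local unitary on each party; in particular $(\rho\otimes\sigma)\otimes\tau$ and $\rho\otimes(\sigma\otimes\tau)$ have the same EOF when each is regarded with respect to the natural bipartite cut $A_\rho A_\sigma A_\tau \mid B_\rho B_\sigma B_\tau$. With this in hand the computation is immediate: applying additivity of $\rho$ with $\xi=\sigma\otimes\tau$ gives $E_f\big(\rho\otimes(\sigma\otimes\tau)\big)=E_f(\rho)+E_f(\sigma\otimes\tau)$, and then applying additivity of $\sigma$ with $\xi=\tau$ gives $E_f(\sigma\otimes\tau)=E_f(\sigma)+E_f(\tau)$, so that $E_f\big((\rho\otimes\sigma)\otimes\tau\big)=E_f(\rho)+E_f(\sigma)+E_f(\tau)$. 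Finally, using additivity of $\rho$ once more with $\xi=\sigma$ yields $E_f(\rho\otimes\sigma)=E_f(\rho)+E_f(\sigma)$, and substituting this into the previous line gives $E_f\big((\rho\otimes\sigma)\otimes\tau\big)=E_f(\rho\otimes\sigma)+E_f(\tau)$, as required.

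I do not expect any genuine obstacle: the argument is pure bookkeeping with the associativity of the tensor product and direct substitution into the hypotheses. The only point worth a sentence of care is the interpretation of the phrase ``the EOF of $\rho$ is additive'': it must be read as additivity against an arbitrary bipartite state on arbitrarily many extra subsystems, which is the standard definition and is exactly what licenses the choice $\xi=\sigma\otimes\tau$ in the second step.
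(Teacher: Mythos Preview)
Your argument is correct. The paper itself does not supply a proof of this lemma; it simply imports the statement from \cite{Chen2007Entanglement}. Your chaining argument using the definition of additivity (apply the hypothesis for $\rho$ with $\xi=\sigma\otimes\tau$, then for $\sigma$ with $\xi=\tau$, then for $\rho$ with $\xi=\sigma$) is exactly the expected one-paragraph verification, and your remark about local-unitary invariance under regrouping of tensor factors is the right justification for freely reassociating. There is nothing further to compare.
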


Furthermore, recall that the entanglement cost is the regularized form of $E_f(\r)$, i.e., $E_c(\r)=\lim_{n\ra\infty}{1\over n}E_f(\r^{\otimes n})$.  So the state whose range is an EB space has equal entanglement cost and EOF. It provides a systematic method of deriving the additivity of EOF. 
	
Next, We shall refer to an $n$-partite biseparable state $\r$ on $\cH_{A_1A_2...A_n}$ as the convex sum of bipartite pure product states over the bipartition of systems $A_1,A_2,...,A_n$. In particular if the states are $n$-partite product states then we say that $\r$ is a fully separable state. If $n=3$ then we denote $A_1=A,A_2=B$ and $A_3=C$ for convenience. So a tripartite 
biseparable state on $\cH_{ABC}$ can be written as 
\begin{eqnarray}
\label{eq:bisep}
&&
\sum_i \proj{\lambda_i}_A\otimes\proj{\psi_i}_{BC}
+
\sum_j \proj{\mu_j}_B\otimes\proj{\phi_j}_{AC}
\notag\\+&&
\sum_k \proj{\nu_k}_C\otimes\proj{\o_k}_{AB}.
\end{eqnarray}
We express the state in \eqref{eq:bisep} as $\d_{ABC}+\e_{ABC}+\z_{ABC}$ where
\begin{eqnarray}
\label{eq:definite}
&&
\d_{ABC}:=\sum_i \proj{\lambda_i}_A\otimes\proj{\psi_i}_{BC},
\notag\\&&
\e_{ABC}:=\sum_j \proj{\mu_j}_B\otimes\proj{\phi_j}_{AC},
\notag\\&&
\z_{ABC}:=\sum_k \proj{\nu_k}_C\otimes\proj{\o_k}_{AB}.
\end{eqnarray}
We shall investigate the three states in the proof of Lemma \ref{le:rank2}. If a multipartite state is not biseparable then we say it is \textit{genuinely entangled (GE)}. To construct more GE states using known GE states, we review two ways of tensor product of two states in the Hilbert spaces $\cH_{A_1A_2\cdots A_n}$ and $\cH_{B_1B_2\cdots B_m}$ \cite{Shen_2020}. The first product is the tensor product $\cH_{A_1A_2\cdots A_n}\ox\cH_{B_1B_2\cdots B_m}$. Denote by $\rho\ox\sigma$ an $(n+m)$-partite state supported on the space $\cH_{A_1A_2\cdots A_n}\ox\cH_{B_1B_2\cdots B_m}$.
The second tensor product is the Kronecker product defined as
\begin{eqnarray}
\label{eq:krtenspace}
\cH_{A_1A_2\cdots A_n}\otimes_{K}\cH_{B_1B_2\cdots B_m}:=\big(\otimes_{i=1}^m (\cH_{A_i}\otimes \cH_{B_i})\big)\otimes \big(\otimes_{i'=m+1}^{n} \cH_{A_{i'}}\big),
\end{eqnarray}
where $m\le n$. 
We denote by $\rho\ox_K \sigma$ the state supported on the Hilbert space in \eqref{eq:krtenspace}. Hence $\r\otimes_K\sigma$ is an $n$-partite state of the systems $(A_1\ox B_1),\cdots,(A_m\ox B_m), A_{m+1},\cdots,A_n$. Now the following observations are clear.

\begin{lemma}
\label{le:rdmge}
(i) $\a\ox_K \b$ is an $n$-partite genuinely entangled state if $\a$ is $n$-partite genuinely entangled.

(ii) Suppose $\b$ is an $m$-partite fully separable state. Then $\a\ox_K \b$ is an $n$-partite genuinely entangled (resp. biseparable, fully separable) state if and only if $\a$ is an $n$-partite genuinely entangled (resp. biseparable, fully separable) state.
\end{lemma}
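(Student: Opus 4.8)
The plan is to prove both parts of Lemma~\ref{le:rdmge} by unpacking the definition of the Kronecker product $\a \ox_K \b$ in \eqref{eq:krtenspace} and tracking how biseparability with respect to a bipartition interacts with the merging of systems $A_i$ with $B_i$ for $i=1,\dots,m$. The central observation is that grouping a system $A_i$ together with $B_i$ into a single system $(A_i \ox B_i)$ can only \emph{coarsen} the set of bipartitions available: every bipartition of the $n$ merged systems $(A_1\ox B_1),\dots,(A_m\ox B_m),A_{m+1},\dots,A_n$ pulls back to a bipartition of the $n$-partite system $A_1,\dots,A_n$ in which $A_i$ and $B_i$ always lie on the same side. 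So a product vector across a bipartition of $\a\ox_K\b$ restricts, after tracing out all $B$-systems, to a product vector across the corresponding bipartition of $\a$.

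For part (i), I would argue the contrapositive: suppose $\a\ox_K\b$ is biseparable, so it can be written in the form analogous to \eqref{eq:bisep} (the $n$-partite generalization), i.e.\ a convex sum $\sum_\kappa \proj{x_\kappa}_{S}\ox\proj{y_\kappa}_{\bar S}$ over bipartitions $(S,\bar S)$ of the merged systems. Apply the partial trace $\tr_{B_1\cdots B_m}$ to both sides. On the left this yields $\a$ (up to the scalar $\tr\b$, which we ignore since states are unnormalized here). On the right, since each $\proj{x_\kappa}_S\ox\proj{y_\kappa}_{\bar S}$ has each $B_i$-factor grouped with its partner $A_i$ on one definite side, tracing out the $B$-systems sends each term to a (sub-normalized) sum of pure product states across the induced bipartition of $A_1,\dots,A_n$; a convex combination of such terms is again biseparable. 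Hence $\a$ is biseparable, contradicting the hypothesis. This gives (i).

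For part (ii), the ``only if'' direction for each of the three properties (GE, biseparable, fully separable) is immediate from (i) and its obvious analogues: if $\a$ is biseparable then writing $\a=\sum_\kappa \proj{x_\kappa}_S\ox\proj{y_\kappa}_{\bar S}$ and $\b=\sum_\lambda \proj{u_\lambda}_{T_1}\ox\cdots$ as a fully separable $m$-partite state, the Kronecker product is $\sum_{\kappa,\lambda} (\proj{x_\kappa}\ox\proj{u_\lambda^{(\cdot)}})_S\ox(\proj{y_\kappa}\ox\proj{u_\lambda^{(\cdot)}})_{\bar S}$, where each $B_i$-factor of $\proj{u_\lambda}$ can be freely attached to whichever side $A_i$ sits on precisely because $\b$ is a product across \emph{every} system; this exhibits $\a\ox_K\b$ as biseparable, and similarly for fully separable, and for GE one uses the contrapositive via part (i). The ``if'' direction is exactly part (i) (for GE) together with the analogous partial-trace arguments (for biseparable: trace out the $B$'s; for fully separable: trace out $B$'s on one side gives full separability of $\a$). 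I would present the biseparable case in detail and note the other two are verbatim the same with ``bipartition'' replaced by the appropriate partition.

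The only real subtlety---and the step I would write most carefully---is the claim that $\tr_{B_1\cdots B_m}$ maps a term $\proj{x_\kappa}_S \ox \proj{y_\kappa}_{\bar S}$ (a product across a merged-system bipartition) to a biseparable operator across the induced $A$-bipartition: one must observe that $\ket{x_\kappa}$ lives in $\bigox_{A_i\ox B_i \in S}(\cH_{A_i}\ox\cH_{B_i}) \ox \bigox_{A_{i'}\in S}\cH_{A_{i'}}$, which need \emph{not} be a product state across the finer splitting of the $A$'s from the $B$'s within $S$, so tracing out those $B$'s produces a \emph{mixed} state on the $A$-part of $S$---but that is harmless, since we only need the result to be a state supported on the $S$-versus-$\bar S$ $A$-bipartition, and any state (pure or mixed) on $S_A$ tensored with any state on $\bar S_A$ is a legitimate summand in a biseparable decomposition. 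For full separability one additionally notes that $\b$ being fully separable forces $\ket{u_\lambda}$ to be a genuine product over all of $B_1,\dots,B_m$, so the partial trace genuinely preserves the product structure down to the single-system level. I do not expect any computational obstacle beyond bookkeeping of indices.
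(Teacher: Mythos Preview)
Your proposal is correct and complete; the partial-trace argument for (i) and the distribute-the-fully-separable-$\b$-factors argument for (ii) are exactly the natural proofs. The paper itself does not give a proof of this lemma at all---it simply states ``Now the following observations are clear'' and moves on---so your write-up supplies what the paper omits. One small remark: your labeling of the ``if'' and ``only if'' directions in part (ii) is swapped for the biseparable and fully separable cases (you call the implication $\a$ biseparable $\Rightarrow$ $\a\ox_K\b$ biseparable the ``only if'' direction, but it is the ``if'' direction of the biconditional as stated), though the arguments themselves are assigned to the right implications and nothing is missing.
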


We point out that Conjecture \ref{cj:aotimesb} is related to the definition of $\a\otimes\b$, by defining that $\a_{AC_1}\otimes\b_{BC_2}$ is a tripartite state of systems $A,B$ and $C$, where $C=C_1C_2$. In the following, ref. \cite{Shen_2020}  has introduced a special Kronecker product. 

\begin{lemma}
	\label{le:kc}
Suppose $\a_{AC_{1,1}C_{1,2}\cdots C_{1,n}}$ and $\b_{BC_{2,1}C_{2,2}\cdots C_{2,n}}$ are two $(n+1)$-partite states on the Hilbert space $\mathcal{H}_{AC_{1,1}C_{1,2}\cdots C_{1,n}}$ and $\mathcal{H}_{BC_{2,1}C_{2,2}\cdots C_{2,n}}$. To
construct an $(n+2)$-partite state,  we apply the Kronecker product as follows,
\begin{eqnarray}
\mathcal{H}_{AC_{1,1}C_{1,2}\cdots C_{1,n}}\otimes_{K_c}\mathcal{H}_{BC_{2,1}C_{2,2}\cdots C_{2,n}}:=\mathcal{H}_A\otimes\mathcal{H}_B\otimes(\mathcal{H}_{C_{1,1}C_{1,2}\cdots C_{1,n}}\otimes_K\mathcal{H}_{C_{2,1}C_{2,2}\cdots C_{2,n}}). 
\end{eqnarray}
Denote by $\a\otimes_{K_c}\b$ a state supported on the space $\cH_{AC_{1,1}C_{1,2}\cdots C_{1,n}}\otimes_{K_c} \cH_{BC_{2,1}C_{2,2}···C_{2,n}}.$ Specially, if $n=1$, then $\a_{AC_1}\otimes\b_{BC_2}$ in Conjecture \ref{cj:aotimesb} can be written as $\a_{AC_1}\otimes_{K_c}\b_{BC_2}$.
\end{lemma}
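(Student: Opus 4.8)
The plan is to verify the claim by directly unravelling the definitions. For general $n$ the statement is purely notational: it introduces the symbol $\otimes_{K_c}$ and specifies the Hilbert space on which $\a\otimes_{K_c}\b$ is supported. The only assertion that needs an argument is the last sentence, namely that for $n=1$ the product $\otimes_{K_c}$ reduces to the product $\otimes$ appearing in Conjecture \ref{cj:aotimesb}.

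First I would specialize the definition \eqref{eq:krtenspace} of the Kronecker product $\otimes_K$ to the case $n=m=1$. There the first factor $\otimes_{i=1}^{m}(\cH_{A_i}\otimes\cH_{B_i})$ collapses to the single space $\cH_{A_1}\otimes\cH_{B_1}$, while the second factor $\otimes_{i'=m+1}^{n}\cH_{A_{i'}}$ is an empty tensor product and hence trivial. Applying this with $\cH_{A_1}=\cH_{C_1}$ and $\cH_{B_1}=\cH_{C_2}$ yields $\cH_{C_1}\otimes_K\cH_{C_2}=\cH_{C_1}\otimes\cH_{C_2}$, which is exactly the space $\cH_C$ of the merged system $C=C_1C_2$. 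Substituting this into the displayed definition of $\otimes_{K_c}$ (with the single subsystems $C_{1,1}=C_1$ and $C_{2,1}=C_2$) gives $\cH_{AC_1}\otimes_{K_c}\cH_{BC_2}=\cH_A\otimes\cH_B\otimes(\cH_{C_1}\otimes\cH_{C_2})=\cH_{ABC}$ with $C=C_1C_2$. This is precisely the Hilbert space on which $\a_{AC_1}\otimes\b_{BC_2}$ is declared to be a tripartite state in Conjecture \ref{cj:aotimesb}; since the density operator underlying $\a\otimes_{K_c}\b$ is by construction the tensor product $\a_{AC_1}\otimes\b_{BC_2}$, and the only content of either notation is the grouping of systems into $A$, $B$, and $C=C_1C_2$, the two coincide.

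There is essentially no obstacle here beyond careful bookkeeping; the one point worth stating explicitly is that the Kronecker product with $m=n=1$ leaves no surviving $A_{i'}$ factors, so that no spurious system appears and the component corresponding to $C$ is exactly $\cH_{C_1}\otimes\cH_{C_2}$. I would also note that for general $n\ge 1$ no further verification is needed, since the lemma there carries no mathematical content beyond fixing the notation $\a\otimes_{K_c}\b$ used in the subsequent sections.
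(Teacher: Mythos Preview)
Your proposal is correct. The paper does not supply a proof of this lemma at all: it is presented as a definition imported from \cite{Shen_2020}, and the final sentence about $n=1$ is simply stated as an immediate observation. Your careful unpacking of the definitions of $\otimes_K$ and $\otimes_{K_c}$ in the case $n=m=1$ is exactly the right (and only) thing to do, and it matches the intended meaning.
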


Based on Lemma \ref{le:kc}, the following fact is proven in \cite{Shen_2020}. It is used in the proof of Theorem \ref{thm:main}.

\begin{lemma}
\label{le:2r2oxr2}
Conjecture \ref{cj:aotimesb} holds if one of the following conditions holds.

(i) One of $\a$ and $\b$ has rank one.

(ii) $\a$ and $\b$ both have rank two.

(iii) One of $\cR(\a)$ and $\cR(\b)$ is not spanned by product vectors.
\end{lemma}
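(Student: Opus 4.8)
The plan is to suppose, toward a contradiction, that $\r_{ABC}=\a_{AC_1}\ox\b_{BC_2}$ is biseparable, to write it as $\d_{ABC}+\e_{ABC}+\z_{ABC}$ as in \eqref{eq:definite}, and to rule this out in each of the three cases. Two observations serve all three. First, $\r$ is entangled across every one of the three bipartitions: $\r_{BC}=\a_{C_1}\ox\b_{BC_2}$ is entangled across $B|C$ because $\b$ is entangled, $\r_{AC}=\a_{AC_1}\ox\b_{C_2}$ is entangled across $A|C$ because $\a$ is entangled, and $\r$ read across $C|AB$ is the tensor product of the entangled bipartite state $\a$ (as $C_1$ versus $A$) with $\b$ (as $C_2$ versus $B$); hence no two of $\d,\e,\z$ can vanish simultaneously. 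Second, $\cR(\r)=\cR(\a)_{AC_1}\ox\cR(\b)_{BC_2}$, and every product vector occurring in $\d$ (resp.\ $\e$, $\z$) lies in $\cR(\r)$ and is a product across $A|BC$ (resp.\ $B|AC$, $C|AB$); Schmidt-decomposing such a vector across the auxiliary cut $(AC_1)|(BC_2)$ puts its $AC_1$-components into $\cR(\a)$ and its $BC_2$-components into $\cR(\b)$. Also, an SLOCC on $\a$ together with an SLOCC on $\b$ combine into an SLOCC on $\r_{ABC}$ (the two invertible maps on $C_1$ and $C_2$ merge into one on $C$), so I may normalize $\a$ and $\b$ freely.

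For case (i), say $\a=\proj{\psi}$ with $\ket\psi$ entangled; by SLOCC take $\ket\psi=\sum_{i=0}^{r-1}\ket{ii}_{AC_1}$ with $r\ge2$. Then $\cR(\r)=\lin\{\ket\psi\}\ox\cR(\b)$, so every nonzero vector of $\cR(\r)$ equals $\ket\psi\ox\ket b$ with $0\ne\ket b\in\cR(\b)$. Grouping by the $A$-index shows its Schmidt rank across $A|BC$ is $r$, and grouping the $C=C_1C_2$ part against $AB$ (after a Schmidt decomposition of $\ket b$ across $B|C_2$) shows its Schmidt rank across $C|AB$ is at least $r$. Hence no nonzero vector of $\cR(\r)$ is a product across $A|BC$ or across $C|AB$, which forces $\d=\z=0$, so $\r=\e$ is separable across $B|AC$ --- contradicting that $\r$ is entangled there.

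For case (ii), if one of $\cR(\a),\cR(\b)$ is not spanned by product vectors we are in case (iii), so assume both two-dimensional ranges are spanned by product vectors; being entangled, neither is of the degenerate form $\ket a\ox V$ or $V\ox\ket b$, so each contains exactly two linearly independent product vectors with non-proportional components on both factors, and after SLOCC $\cR(\a)=\cR(\b)=\lin\{\ket{00},\ket{11}\}$ (in $\cH_{AC_1}$ and $\cH_{BC_2}$). Putting $\ket{v_{ij}}:=\ket i_A\ket j_B\ket i_{C_1}\ket j_{C_2}$, we get $\cR(\r)=\lin\{\ket{v_{ij}}\}_{i,j\in\{0,1\}}$, a span of four orthonormal fully product vectors. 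I would then classify the product vectors of $\cR(\r)$ across each cut: across $C|AB$ they are exactly the $\ket{v_{ij}}$ (the $\{\ket{v_{ij}}\}$-expansion is already a Schmidt decomposition there), across $A|BC$ they lie in $\lin\{\ket{v_{00}},\ket{v_{01}}\}$ or in $\lin\{\ket{v_{10}},\ket{v_{11}}\}$, and across $B|AC$ in $\lin\{\ket{v_{00}},\ket{v_{10}}\}$ or in $\lin\{\ket{v_{01}},\ket{v_{11}}\}$. Hence $\z$ is diagonal in $\{\ket{v_{ij}}\}$, $\d$ is block-diagonal with respect to $\lin\{\ket{v_{00}},\ket{v_{01}}\}\oplus\lin\{\ket{v_{10}},\ket{v_{11}}\}$, and $\e$ with respect to $\lin\{\ket{v_{00}},\ket{v_{10}}\}\oplus\lin\{\ket{v_{01}},\ket{v_{11}}\}$. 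Since $\ket{v_{01}}$ and $\ket{v_{10}}$ lie in different blocks of all three structures, $\bra{v_{01}}(\d+\e+\z)\ket{v_{10}}=0$; but $\bra{v_{01}}\r\ket{v_{10}}=\bra{00}\a\ket{11}\,\bra{11}\b\ket{00}\ne0$ since $\a$ and $\b$ are entangled. Contradiction.

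Case (iii) I expect to be the main obstacle. Here $\cR(\a)$ is not spanned by product vectors (so $\a$ is automatically entangled) and $\b$ is entangled. Let $U_\d,U_\e,U_\z$ be the $AC_1$-marginal spans of $\cR(\d),\cR(\e),\cR(\z)$, so $\cR(\a)=U_\d+U_\e+U_\z$. A matrix analysis of the product vectors shows $U_\d=\sum_i\ket{\lambda_i}_A\ox S_i$ with $S_i\sue\cH_{C_1}$, and $U_\z=\sum_k W^A_k\ox W^{C_1}_k$ with $W^A_k\sue\cH_A$ and $W^{C_1}_k\sue\cH_{C_1}$; both are spanned by product vectors, so the failure of $\cR(\a)$ to be product-spanned must come from $U_\e=\lin\{\bra n_{C_2}\ket{\phi_j}:j,n\}$, the span of the $C_2$-slices of the vectors $\ket{\phi_j}_{AC}$ appearing in $\e$, whose generators need not be product; in particular $\e\ne0$. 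The plan is then to prove that $U_\e$ is product-spanned after all, using the extra information carried by $\r\ge\e$ --- namely $\ket{\phi_j}\in\cR(\a)_{AC_1}\ox\cR(\b_{C_2})_{C_2}$ and $\ket{\mu_j}_B\ox R_j\sue\cR(\b)$, where $R_j$ is the $C_2$-support of $\ket{\phi_j}$ --- combined with the entanglement of $\b$ and the fact that $\d$ and $\z$ cannot both vanish. The delicate point, and the place where a genuinely careful argument seems unavoidable (a case split matching the classification of two-dimensional subspaces into degenerate and non-degenerate pencils, and the use of Lemma \ref{le:kc}), is to exclude the scenario in which the ``$B$-versus-$AC$ separable'' operator $\e$ hides a non-product direction of $\cR(\a)$ in its off-$B$-diagonal part while still remaining $\le\r$.
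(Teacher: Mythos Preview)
The paper does not itself prove Lemma~\ref{le:2r2oxr2}; it is quoted from \cite{Shen_2020}. Your arguments for (i) and (ii) are correct and pleasantly direct --- the off-diagonal matrix-element test $\bra{v_{01}}\r\ket{v_{10}}\ne0$ in (ii) is particularly clean. The genuine gap is (iii), which you leave as a plan.

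Your proposed route for (iii) --- proving that the $AC_1$-marginal span $U_\e$ is spanned by product vectors --- is not the right target and may well be false: a subspace of a product-spanned space need not itself be product-spanned, and nothing in $\r\ge\e$ forces the slices $(\bra{n}_{C_2})\ket{\phi_j}\in\cR(\a)$ to be product across $A|C_1$. The clean argument sidesteps this entirely. Let $P\su\cR(\a)$ be the span of all product vectors of $\cR(\a)$; since $P\subsetneq\cR(\a)$ by hypothesis, choose a nonzero $\ket{\chi}\in\cR(\a)$ with $\ket{\chi}\perp P$, and contract $\r=\d+\e+\z$ with $\ket{\chi}$ on the $AC_1$ factor. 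Each pure term $\ket{\lambda_i,\psi_i}$ of $\d$ lies in $\cR(\a)\ox\cR(\b)$; Schmidt-decomposing it across $(AC_1)|(BC_2)$ produces $AC_1$-legs of the form $\ket{\lambda_i}_A\ox\ket{\cdot}_{C_1}$, which are product vectors in $\cR(\a)$, hence in $P$, hence orthogonal to $\ket{\chi}$. Thus $(\bra{\chi}\ox I_{BC_2})\,\d\,(\ket{\chi}\ox I_{BC_2})=0$. The identical reasoning --- with $AC_1$-legs now of the product form $(\bra{b}\ket{\o_k})_A\ox(\bra{c_2}\ket{\nu_k})_{C_1}$ --- gives $(\bra{\chi}\ox I)\,\z\,(\ket{\chi}\ox I)=0$. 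For $\e$ one obtains
\[
(\bra{\chi}\ox I)\,\e\,(\ket{\chi}\ox I)=\sum_j\proj{\mu_j}_B\ox\proj{\tilde c_j}_{C_2},\qquad \ket{\tilde c_j}:=(\bra{\chi}_{AC_1})\ket{\phi_j}_{AC_1C_2}\in\cH_{C_2},
\]
which is $B|C_2$-separable. But $(\bra{\chi}\ox I)\,\r\,(\ket{\chi}\ox I)=\bra{\chi}\a\ket{\chi}\cdot\b$, a nonzero multiple of the entangled state $\b$. This is the contradiction, and with (iii) established your reduction at the start of (ii) is fully justified.
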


Moreover, we recall the definition of distillable states \cite{DivincenzoEvidence}. A bipartite state $\r$ is n-distillable under LOCC if there exists a Schmidt-rank-two bipartite state $\ket{\psi}\in\mathcal{H}^{\otimes n}$ such that $\bra{\psi}(\r^{\otimes n})^\Gamma\ket{\psi}<0$. In quantum information theory, positive partial transpose (PPT) states are not distillable under LOCC \cite{Djokovic2016On}. So 
distillable states must be non-PPT (NPT) states \cite{Djokovic2016On}. We know that NPT states can be convert into NPT Werner states by LOCC \cite{Kraus2002Characterization,Reinaldo2006Distillability}. The Werner state is closely related to the distillability problem which lies in the heart of quantum entanglement theory \cite{D1999Separability}. The following is a well-known lemma on the distillability, and it is used in the proof of Lemma \ref{le:rank2}.
\begin{lemma}
	\label{le:werner}
	The Werner state $\r_w(d,p)$ is 
	
	(i) separable when $p\in[-\frac{1}{d},1]$;
	
	(ii) NPT and one-copy undistillable when $p\in[-\frac{1}{2},-\frac{1}{d})$;
	
	(iii) NPT and one-copy distillable when $p\in[-1,-\frac{1}{2})$.
\end{lemma}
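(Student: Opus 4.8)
The plan is to reduce the whole statement to a single eigenvalue computation for the partial transpose of $\r_w(d,p)$ together with one elementary optimization over Schmidt-rank-$\le 2$ vectors. First I would record $\r_w(d,p)$ in the standard flip-operator form $\r_w(d,p)=\frac{1}{d^2+dp}(I+pV)$ on $\cH\ox\cH$ with $\cH=\bbC^d$ and $p\in[-1,1]$, where $V=\sum_{i,j}\ketbra{ij}{ji}$ is the swap operator; this is the normalized $U\ox U$-invariant state in the parametrization that makes the three thresholds come out as claimed (at $p=1$ it is proportional to the symmetric projector and at $p=-1$ to the antisymmetric one). A direct computation gives $V^\Gamma=d\proj{\Phi^+}$ with $\ket{\Phi^+}=\frac1{\sqrt d}\sum_i\ket{ii}$, hence $\r_w(d,p)^\Gamma=\frac{1}{d^2+dp}\bigl(I+pd\proj{\Phi^+}\bigr)$. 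Since $d\ge2$ and $p\in[-1,1]$ force $d^2+dp>0$, the spectrum of $\r_w(d,p)^\Gamma$ is $\tfrac{1+pd}{d^2+dp}$ on the line $\bbC\ket{\Phi^+}$ and $\tfrac{1}{d^2+dp}>0$ with multiplicity $d^2-1$. Thus $\r_w(d,p)$ is PPT iff $1+pd\ge0$, i.e. $p\ge-\tfrac1d$, and NPT precisely when $p<-\tfrac1d$.

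For (i) it then remains to upgrade PPT to separability: here I would invoke the classical fact that every $U\ox U$-invariant PPT state is separable, or equivalently exhibit $\r_w(d,p)$ for $p\in[-\tfrac1d,1]$ as a $U\ox U$-twirl of a product state $\proj{\phi}\ox\proj{\phi}$ against a probability measure. Both are standard and can simply be cited. Combined with the previous paragraph this settles (i), and it also shows that throughout the parameter ranges of (ii) and (iii) the state is NPT, so only the distillability dichotomy remains.

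For (ii) and (iii) I would apply the stated definition verbatim: $\r_w(d,p)$ is $1$-distillable iff there is a normalized Schmidt-rank-$\le2$ vector $\ket\psi$ with $\bra\psi\r_w(d,p)^\Gamma\ket\psi<0$, which by the formula above is equivalent to $1+pd\,\abs{\braket{\psi}{\Phi^+}}^2<0$, hence (forcing $p<0$) to $\abs{\braket{\psi}{\Phi^+}}^2>\tfrac{1}{d\abs p}$. The key lemma is that $\max\{\abs{\braket{\psi}{\Phi^+}}^2:\ \ket\psi\ \text{normalized},\ \text{Schmidt rank}\le2\}=\tfrac2d$: writing $\ket\psi=\sqrt{\lambda_1}\ket{e_1}\ket{f_1}+\sqrt{\lambda_2}\ket{e_2}\ket{f_2}$ in Schmidt form one gets $\abs{\braket{\psi}{\Phi^+}}\le\tfrac1{\sqrt d}(\sqrt{\lambda_1}+\sqrt{\lambda_2})\le\sqrt{2/d}$ by Cauchy--Schwarz, with equality when $\lambda_1=\lambda_2=\tfrac12$ and $\ket{f_i}$ is the computational-basis conjugate of $\ket{e_i}$. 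Hence a valid $\ket\psi$ exists iff $\tfrac2d>\tfrac1{d\abs p}$, i.e. iff $\abs p>\tfrac12$; this yields $1$-distillability exactly for $p\in[-1,-\tfrac12)$, proving (iii), while the complementary NPT range $p\in[-\tfrac12,-\tfrac1d)$ is NPT and one-copy undistillable, proving (ii).

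The main obstacle I anticipate is bookkeeping rather than depth: pinning down the normalization convention for $\r_w(d,p)$ so that the thresholds land exactly at $-\tfrac1d$ and $-\tfrac12$, and handling the boundary values carefully, namely $p=-\tfrac1d$ (PPT, hence separable, with the $\ket{\Phi^+}$-eigenvalue of $\r_w^\Gamma$ vanishing) and $p=-\tfrac12$ (where the optimal overlap equals $\tfrac2d$ but the required inequality $\abs{\braket{\psi}{\Phi^+}}^2>\tfrac2d$ is strict, so the state is still one-copy undistillable), so that the half-open intervals in (ii) and (iii) are exactly right. The overlap optimization is elementary and the PPT $\Rightarrow$ separable step for Werner states is classical.
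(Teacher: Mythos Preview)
Your argument is correct and complete: the flip-operator parametrization, the computation $V^\Gamma=d\proj{\Phi^+}$, the PPT threshold at $p=-1/d$, the invocation of Werner's separable$=$PPT result for $U\ox U$-invariant states, and the Schmidt-rank-two overlap bound $\max\abs{\braket{\psi}{\Phi^+}}^2=2/d$ (with the careful treatment of the boundary $p=-1/2$) are all right and yield exactly the stated intervals.

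By way of comparison, the paper does not prove this lemma at all. It is introduced as ``a well-known lemma on the distillability'' and is simply stated without argument; the paper only \emph{uses} it (in the proof of Lemma~\ref{le:rank2}~(iv)) together with the fact that any NPT state can be LOCC-converted to an NPT Werner state. So you have supplied a self-contained proof where the paper merely cites the result. There is nothing to correct; if you want to match the paper's presentation you could simply cite the original sources (Werner for separability vs.\ PPT of $U\ox U$-invariant states, and the DiVincenzo--Shor--Smolin--Terhal--Thapliyal / D\"ur--Cirac analysis for the one-copy distillability threshold), but your direct derivation is perfectly fine.
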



The following fact is from \cite{Chen2007Entanglement}. It will be used to construct a tripartite state whose bipartite reduced density operators have additive EOF in Sec. \ref{sec:app}. This is an application of our results. 

\begin{lemma}
\label{le:addeofOTIMESsep=addeof}
Suppose $\a_{AB}$ has additive EOF and $\b_{CD}$ is a separable state. Then $\r_{AC:BD}=\a_{AB}\otimes\b_{CD}$ is a bipartite entangled state of additive EOF.	
\end{lemma}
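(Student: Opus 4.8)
The plan is to reduce the statement to the additivity hypothesis on $\a_{AB}$ combined with the elementary observation that a separable state has vanishing, hence additive, EOF. First I would record two standard facts about $E_f$ that underlie this circle of ideas: subadditivity, $E_f(\rho\otimes\sigma)\le E_f(\rho)+E_f(\sigma)$ for all bipartite states $\rho,\sigma$ (obtained by tensoring optimal pure-state decompositions), and monotonicity under discarding a local subsystem, i.e. $E_f$ does not increase when we trace out a tensor factor lying entirely on one side of the bipartition (such a partial trace is a local channel, and $E_f$ is an LOCC monotone).

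Next I would observe that $\b_{CD}$ has additive EOF. Indeed $E_f(\b_{CD})=0$, since a separable state admits a decomposition into pure product states, and for any bipartite $\tau$ one then has $E_f(\tau)\le E_f(\b_{CD}\otimes\tau)\le E_f(\b_{CD})+E_f(\tau)=E_f(\tau)$, where the first inequality is the "discard a local subsystem" monotonicity applied to the two local factors of $\b_{CD}$ (one on each side of the bipartition) and the second is subadditivity; hence $E_f(\b_{CD}\otimes\tau)=E_f(\b_{CD})+E_f(\tau)$. Since $\a_{AB}$ has additive EOF by hypothesis and $\b_{CD}$ has additive EOF, Lemma \ref{le:eof} yields that $\r_{AC:BD}=\a_{AB}\otimes\b_{CD}$ has additive EOF. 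Here one only needs to check that the bipartition $AC:BD$ in the statement is exactly the bipartition under which the tensor product of the two bipartite states $\a_{AB}$ and $\b_{CD}$ is formed in Lemma \ref{le:eof}, namely the $A$-part of $\a$ is grouped with the $C$-part of $\b$ and the $B$-part with the $D$-part.

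Finally, to see that $\r_{AC:BD}$ is entangled I would apply the additivity of $E_f$ for $\a_{AB}$ with second argument $\b_{CD}$, giving $E_f(\r_{AC:BD})=E_f(\a_{AB})+E_f(\b_{CD})=E_f(\a_{AB})$. Assuming $\a_{AB}$ is entangled, this is strictly positive, so $\r_{AC:BD}$ is entangled; if $\a_{AB}$ were itself separable then $\r_{AC:BD}$ would be separable with (trivially additive) zero EOF, so the substantive content is precisely the entangled case.

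The main obstacle is conceptual rather than computational: superadditivity of $E_f$ fails for general states, so the additivity conclusion genuinely relies on the hypothesis that $\a_{AB}$ has additive EOF and on invoking Lemma \ref{le:eof}. The only mildly technical points are the "discard a local subsystem" monotonicity used to dispose of the separable factor $\b_{CD}$, and the bookkeeping of the $AC:BD$ bipartition so that Lemma \ref{le:eof} applies verbatim.
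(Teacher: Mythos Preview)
Your argument is correct. The paper, however, does not supply its own proof of this lemma: it is stated as a fact imported from \cite{Chen2007Entanglement}. So there is no ``paper's proof'' to compare against in the strict sense.

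That said, your route is exactly the natural one given the tools the paper makes available: you observe that a separable $\b_{CD}$ has $E_f(\b_{CD})=0$ and is therefore additive (via subadditivity and LOCC monotonicity under discarding local subsystems), and then you invoke Lemma~\ref{le:eof} to conclude that $\a_{AB}\otimes\b_{CD}$ has additive EOF. This is precisely the mechanism the paper itself uses later, in the proof of Theorem~\ref{thm:EOF:dis}~(i.b), where it argues that $\a_{A_lC_l}^{(l)}$ and the single-system (hence separable) factor $\otimes_{j\in\mathcal{M}}\a_{C_j}^{(j)}$ both have additive EOF and then applies Lemma~\ref{le:eof}. Your bookkeeping of the $AC{:}BD$ cut is correct, and your remark that the ``entangled'' part of the conclusion tacitly assumes $\a_{AB}$ is entangled is well taken; the lemma as stated is only nontrivial in that case.
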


\section{Tensor product of entanglement-breaking subspaces}
\label{sec:ebOXeb}

In this section, we investigate Conjecture \ref{cj:aotimesb} when the range of $\a$ and $\b$ are both EB subspaces. We say that a bipartite state $\r$ is locally projected onto (or locally convertible to) another state $\s$ when there exists a local operator $P\otimes Q$ such that $(P\otimes Q)\r(P^\dg\otimes Q^\dg)=\s$. We refer to the maximally correlated (MC) states as the states $\sum_{i,j}c_{ij}\ketbra{ii}{jj}$. One can show that the range of any MC state is an EB subspace. We present the main result of this section as follows.
\begin{theorem}
\label{thm:main}
Suppose $\a$ and $\b$ are two bipartite entangled states. Then Conjecture \ref{cj:aotimesb} holds when one of the following two conditions (i) and (ii) is satisfied.

(i)  $\a$ and $\b$ can be both locally projected onto entangled states of rank one or two.

(ii) $\cR(\a)$ and $\cR(\b)$ are  subspaces of two EB spaces spanned by $\{\ket{a_1,1},...,\ket{a_n,n}\}$ and $\{\ket{b_1,1},...,\ket{b_m,m}\}$, respectively.

(iii) Furthermore, if $\cR(\a)$ is a subspace of the EB space spanned by $\{\ket{a_1,1},...,\ket{a_n,n}\}$, then Conjecture \ref{cj:aotimesb} holds for $\a$ and all $\b$ if and only if it holds for every bipartite state whose range is spanned by $\ket{1,1}$ and $\ket{2,2}$ and all $\b$.
\end{theorem}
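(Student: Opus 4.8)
The plan is to prove the two implications of the biconditional separately. The forward ("only if") direction is immediate: $\lin\{\ket{1,1},\ket{2,2}\}$ is itself an EB space of the form $\lin\{\ket{a_1,1},\dots,\ket{a_n,n}\}$ (take $n=2$, $\ket{a_1}=\ket 1$, $\ket{a_2}=\ket 2$), so every entangled bipartite state supported on it is already among the states treated on the left-hand side. Hence if Conjecture \ref{cj:aotimesb} holds for all entangled $\a$ with $\cR(\a)$ inside some such EB space and all $\b$, it holds in particular for all rank-two states with range $\lin\{\ket{1,1},\ket{2,2}\}$ and all $\b$.

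The substantive direction is the converse. Assume Conjecture \ref{cj:aotimesb} holds for every entangled bipartite state with range $\lin\{\ket{1,1},\ket{2,2}\}$ and all $\b$. Let $\a$ be entangled with $\cR(\a)\sue\lin\{\ket{a_1,1},\dots,\ket{a_n,n}\}$, let $\b$ be entangled, and suppose for contradiction that $\a\ox\b$ is biseparable. Dropping any index $j$ with $\ket{a_j}=0$ (which does not affect $\a$) and using that the $\ket j$ are orthonormal, write $\a=\sum_{j,k}c_{jk}\ketbra{a_j,j}{a_k,k}$ with $C=(c_{jk})$ positive semidefinite. The key step is to produce a \emph{good pair}: indices $j\ne k$ with $c_{jk}\ne0$ and $\ket{a_j},\ket{a_k}$ linearly independent. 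If no good pair existed, $c_{jk}$ would vanish whenever $\ket{a_j}\not\parallel\ket{a_k}$, so $C$ would be block-diagonal along the classes of mutually parallel $\ket{a_j}$; each block then contributes a summand supported on a product subspace $\ket v\ox\lin\{\ket j:j\text{ in the block}\}$, hence separable, forcing $\a$ to be separable — a contradiction.

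Now fix a good pair $j,k$. Applying the local projection $I_A\ox(\proj j+\proj k)_{C_1}\ox I_{BC_2}$ to the biseparable state $\a\ox\b$ yields a biseparable state $\tilde\a\ox\b$ with $\tilde\a=\sum_{r,s\in\{j,k\}}c_{rs}\ketbra{a_r,r}{a_s,s}$. Because $c_{jk}\ne 0$, the state $\tilde\a$ is SLOCC-equivalent to a maximally correlated state with nonzero off-diagonal coefficient, hence is entangled; if $\tilde\a$ has rank one, Lemma \ref{le:2r2oxr2}(i) already makes $\tilde\a\ox\b$ GE, contradicting biseparability, so we may assume $\tilde\a$ has rank two and $\cR(\tilde\a)=\lin\{\ket{a_j,j},\ket{a_k,k}\}$. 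Since $\ket{a_j},\ket{a_k}$ are linearly independent there is an invertible $X_A$ on $\cH_A$ with $X_A\ket{a_j}=\ket1$, $X_A\ket{a_k}=\ket2$, and there is a unitary $U$ on $\cH_{C_1}$ with $U\ket j=\ket1$, $U\ket k=\ket2$. Conjugating $\tilde\a\ox\b$ by the local operator $X_A\ox U\ox I_B\ox I_{C_2}$, which respects the tripartition $A|B|C$ and sends $\ket{a_j,j}\mapsto\ket{1,1}$, $\ket{a_k,k}\mapsto\ket{2,2}$, produces a biseparable state $\hat\a\ox\b$ with $\hat\a$ entangled and $\cR(\hat\a)=\lin\{\ket{1,1},\ket{2,2}\}$. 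As $\b$ is entangled, the assumption forces $\hat\a\ox\b$ to be GE, a contradiction. Hence $\a\ox\b$ is GE, proving the converse. The only facts used are that biseparability of (unnormalized) states is preserved under local projections and invertible local operations — evident from \eqref{eq:bisep} — and that a rank-two maximally correlated state is entangled exactly when its off-diagonal entry is nonzero.

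I expect the good-pair claim to be the main obstacle: one must rule out the scenario in which all of the entanglement of $\a$ is concentrated in blocks spanned by parallel $A$-directions, since only then can a single $2\times2$ local projection of the $C_1$-label be guaranteed to retain the off-diagonal coherence, hence the entanglement. Everything after that is bookkeeping with local operations and SLOCC normalization.
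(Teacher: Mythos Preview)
Your argument for part (iii) is correct and takes a genuinely different route from the paper. The paper proves (iii) by recycling the machinery of (ii): it fixes a decomposition $\a=\sum_j\proj{\psi_j}$ with $\ket{\psi_j}=\sum_i c_{ij}\ket{a_i,i}$, arranges that only $\ket{\psi_1}$ has a $\ket{a_1,1}$ component, and then either projects onto a suitable pair of $C_1$-labels (when some $c_{k1}\ket{a_k}$ is independent of $\ket{a_1}$) or peels off $\ket{\psi_1}$ as a product vector via a local projector and recurses on the remaining sum until a rank-$\le 2$ entangled piece is isolated. Your good-pair argument replaces that recursion by a one-shot separability obstruction: writing $\a=\sum_{j,k}c_{jk}\ketbra{a_j,j}{a_k,k}$ with $C\ge 0$, if no pair $j\ne k$ has both $c_{jk}\ne 0$ and $\ket{a_j}\not\parallel\ket{a_k}$, then $C$ is block-diagonal along the parallelism classes of the $\ket{a_j}$ and each block contributes a term of the form $\proj{v}\otimes M$, forcing $\a$ separable. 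This is shorter, avoids the iterated peeling and the decomposition-freedom step, and uses nothing beyond positivity of principal submatrices of $C$. What the paper's approach buys is reuse: the same recursive lemma is what proves (ii), which is then invoked verbatim for (iii). You address only (iii), but (i) is immediate from Lemma~\ref{le:2r2oxr2}, and your good-pair argument applied symmetrically to $\b$ yields (ii) via (i) with no additional ideas, so the omission is cosmetic rather than substantive.
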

\begin{proof}
(i) According to Lemma \ref{le:2r2oxr2} (i), (ii), then we have (i) holds.

(ii) Let
$\a=\sum_{j\ge1} \proj{\ps_j}$ and $\ket{\ps_j}=\sum^n_{i=1} c_{ij}\ket{a_i,i}$. Using Wootters' decomposition, we may assume that $c_{11}\ne0$ and $c_{1j}=0$ for $j>1$. We have two cases (ii.a) and (ii.b). In case (ii.a), we assume that there exists $k$ such that $\ket{a_1}$ and $c_{k1}\ket{a_k}$ are linearly independent. One can show that the projected state $P \a P$ with $P=I\otimes (\proj{1}+\proj{k})$ is an entangled state of rank at most two. In case (ii.b), we consider the case that $\ket{a_1}$ and $c_{k1}\ket{a_k}$ are linearly dependent for any $k$. We can find a local projector $Q=I\otimes Q_1$ such that $Q\ket{\ps_1}=\ket{a_1,1}$ and $Q\ket{\ps_j}=\ket{\ps_j}$ for $j>1$. Hence $Q\a Q=\proj{a_1,1}+\sum_{j>1} \proj{\ps_j}$. Now we repeat the above argument to the state $\sum_{j>1} \proj{\ps_j}$, then we have $c_{22}\neq0$ and $c_{2j}=0$. We can find a local projector $R_1=I\otimes R_{11}$ such that $R_1\ket{\psi_1}=\ket{a_1,1}$, $R_1\ket{\psi_2}=\ket{a_2,2}$ and $R_1\ket{\psi_j}=\ket{j}, j>2$. Hence $R_1Q\a QR_1=\ketbra{a_1,1}{a_1,1}+\ketbra{a_2,2}{a_2,2}+\sum_{j>2}\ketbra{\psi_j}{\psi_j}$. Finally we will find a few local projectors $R_1,...,R_s$ such that $R_s...R_1 Q\a Q R_1...R_s=\sum_{i=1}^s\ketbra{a_i,i}{a_i,i}+\ketbra{\psi_{s+1}}{\psi_{s+1}}$, where $\ket{\psi_{s+1}}=\sum_{i=s+1}^nc_{i,s+1}\ket{a_i,i}$. Because $\a$ is an entangled state, then the state  $\ketbra{\psi_{s+1}}{\psi_{s+1}}$ is entangled. So  
we can project $R_s...R_1 Q\a Q R_1...R_s$ onto state $\ketbra{\psi_{s+1}}{\psi_{s+1}}$ 
with a projected operator $P'$, where
\begin{eqnarray}
\label{eq:proj}
P'=I\otimes(\ketbra{m_1}{m_1}+\ketbra{m_2}{m_2}), m_1,m_2\in\{s+1,\cdots,n\}.
\end{eqnarray}  
To conclude, we can always project $\a$ onto an entangled state of rank at most two. One can repeat the above argument to $\b$, and project $\b$ onto another entangled state of rank at most two. Now the claim follows from (i).

(iii) The "only if" part holds when $\mathcal{R}(\a)=\mathcal{R}(\g)$. We prove the "if" part. Up to local equivalence, suppose $\mathcal{R}(\g)$ is spanned by $\ket{a_1,1}$ and $\ket{a_2,2}$, then Conjecture \ref{cj:aotimesb} holds for $\g$ and all $\b$. Because $\mathcal{R}(\a)$ is a subspace of EB space spanned by $\{\ket{a_1,1},\cdots,\ket{a_n,n}\}$, then from (ii), there are two cases (iii.a) and (iii.b). In case (iii.a), we may assume that $\ket{a_1}$ and $c_{21}\ket{a_2}$ are linearly independent. we can show that the projected state $P\a P$ with $P=I\otimes(\ketbra{1}{1}+\ketbra{2}{2})$ is an entangled state in $\mathcal{R}(\g)$. In case (iii.b), we consider that $\ket{a_1}$ and $c_{k1}\ket{a_k}$ are linearly dependent for any $k$. From (ii.b), we can project $R_s...R_1 Q\a Q R_1...R_s$ onto state $\ketbra{\psi_{s+1}}{\psi_{s+1}}$ with projected operator $P'$ of \eqref{eq:proj}. Then we can find $Q'=I\otimes Q_2$ such that $Q'\ket{a_{m_1},m_1}=\ket{a_{m_1},1}$ and $Q'\ket{a_{m_2},m_2}=\ket{a_{m_2},2}$. So the  entangled state  $Q'P'R_s...R_1 Q\a Q R_1...R_sP'Q'$ is in $\mathcal{R}(\g)$.  According to the assumption, $\g_{AC_1}\otimes\b_{BC_2}\in\mathcal{H}_{ABC}$ is a tripartite GE state with $C=C_1C_2$. From the above cases, we can project the entangled state $\a$ in $\g$. Hence, $\a_{AC_1}\otimes\b_{BC_2}\in\mathcal{H}_{ABC}$ is also a tripartite GE state. 
\end{proof}

Note that the key of above proof is to show that the state whose range is an EB subspace can be converted to another state whose range is a $2$-dimensional EB subspace. Further, since the GE state remains a GE state up to local invertible operations, Theorem \ref{thm:main} (ii) and (iii) are valid if we replace $\ket{j}$'s by an arbitrary set of basis. 

Next, we shall investigate whether the condition in Theorem \ref{thm:main} (iii) holds. More explicitly, we investigate Conjecture \ref{cj:aotimesb} when the range of $\a$ is the simplest EB space, namely the two-qubit space spanned by $\ket{0,0}$ and $\ket{1,1}$. For this purpose, we present four facts with $\a$ is a bipartite entangled state of rank two in the following Lemma \ref{le:rank2}. By assuming $\a_{AC_1}\otimes_{K_c}\b_{BC_2}$ is a tripartite biseparable state, we  show specific representations and internal equivalences of $\a_{AC_1}, \d_{ABC}$ and  $\e_{ABC}$ defined in \eqref{eq:definite}. Then we show the proof of Lemma \ref{le:rank2} in Appendix \ref{app:le:bisep}.

\begin{lemma}
\label{le:rank2}
Suppose $\a$ is a bipartite entangled state of rank two, and $\b$ is a bipartite entangled state. If $\a_{AC_1}\otimes_{K_c} \b_{BC_2}$ is a tripartite biseparable state then

(i) $\a_{AC_1}\otimes_{K_c} \b_{BC_2}=
\d_{ABC}+\e_{ABC}$ defined in \eqref{eq:definite}.

(ii) up to local equivalence on systems $A$ and $C_1$ we may assume that
\begin{eqnarray}
\label{eq:sigmaABC}
\a_{AC_1}=&&
\cos^2\t(\cos\m\ket{0,0}+\sin\m\ket{1,1})(\cos\m\bra{0,0}+\sin\m\bra{1,1})
\notag\\+&&
\sin^2\t\proj{0,0},
\quad\quad \t,\m\in(0,\p/2),
\\\notag\\\label{eq:deltaABC}
\d_{ABC}=
&&
f\cos^2\n\proj{0,0}_{AC_1}\otimes (\b_0)_{BC_2}
+f\sin^2\n\proj{1,1}_{AC_1}\otimes (\b_1)_{BC_2},
\notag\\
f\in && (0,1),
\quad\quad
\n \in [0,\p/2],
\\\label{eq:epsilonABC}
\e_{ABC}=&&
(1-f)\sum^d_{j=1}
p_j
\proj{w_j}_B
\otimes
(
\cos\xi_j
\ket{0,0}_{AC_1}\ket{x_j}_{C_2}
+
\sin\xi_j
\ket{1,1}_{AC_1}\ket{y_j}_{C_2}
)
\notag\\&&
(\cos\xi_j
\bra{0,0}_{AC_1}\bra{x_j}_{C_2}
+\sin\xi_j
\bra{1,1}_{AC_1}\bra{y_j}_{C_2}
),
\notag\\&&
\sum^d_{j=1} p_j=1,
\quad
p_j>0,
\quad
\xi_j\in(0,\p/2),
\end{eqnarray}
where we have removed the two ends $\x_j=0$ and $\p/2$ by merging $\d_{ABC}$ with the pure states satisfying one of the two ends in $\e_{ABC}$.


Hence
\begin{eqnarray}
\label{eq:00}
(\cos^2\t\cos^2\m+\sin^2\t)\b_{BC_2} 	
=&&
f\cos^2\n(\b_0)_{BC_2}
+
(1-f)\sum^d_{j=1}p_j\cos^2\x_j\proj{w_j,x_j}_{BC_2},
\notag\\\\\label{eq:11}
\cos^2\t\sin^2\m\b_{BC_2} 	
=&&
f\sin^2\n(\b_1)_{BC_2}
+
(1-f)\sum^d_{j=1}p_j\sin^2\x_j\proj{w_j,y_j}_{BC_2},
\notag\\\\\label{eq:0011}
\cos^2\t\cos\m\sin\m\b_{BC_2} 	
=&&
(1-f)\sum^d_{j=1}p_j\cos\x_j\sin\x_j\ket{w_j,x_j}\bra{w_j,y_j}_{BC_2}.
\end{eqnarray}

(iii) if $\e_{ABC}^{\G_{AC_1}}=\e_{ABC}$, then $\e_{ABC}$ is a bipartite separable state with respect to the partition $AC_1$ and $BC_2$, i.e.,
\begin{eqnarray}
\label{eq:ac1:bc2}
\e_{ABC}=&&
(1-f)
\sum^r_{k=1}
q_k
(\cos\eta_k\ket{0,0}+\sin\eta_k\ket{1,1})
(\cos\eta_k\bra{0,0}+\sin\eta_k\bra{1,1})_{AC_1}
\otimes
\proj{\ps_k}_{BC_2},	
\notag\\&&
\sum^r_{k=1}
q_k=1,
\quad
q_k>0,
\quad
\eta_k\in(-\p/2,0)\cup(0,\p/2),
\notag\\&&
\cR(\b_{BC_2})=\lin\{\ket{\ps_k}\},
\quad
r=\rank\e_{ABC}\ge \dim \cR(\b_{BC_2}),
\end{eqnarray}
where we have removed the three points $\eta_k=-\p/2,0$ and $\p/2$ by the upcoming \eqref{eq:wjxj} and \eqref{eq:wjyj}.

By comparing \eqref{eq:epsilonABC} and \eqref{eq:ac1:bc2}, we have
\begin{eqnarray}
\label{eq:wjxj}
&&
\sum^d_{j=1}p_j\cos^2\x_j\proj{w_j,x_j}=\sum^r_{k=1}q_k\cos^2\eta_k\proj{\ps_k},
\\\label{eq:wjyj}
&&
\sum^d_{j=1}p_j\sin^2\x_j\proj{w_j,y_j}=\sum^r_{k=1}q_k\sin^2\eta_k\proj{\ps_k},
\\\label{eq:wjzj}
&&
\sum^d_{j=1}p_j\cos\x_j\sin\x_j\ket{w_j,x_j}\bra{w_j,y_j}=\sum^r_{k=1}q_k\cos\eta_k\sin\eta_k\proj{\ps_k}.
\end{eqnarray}


(iv) we may assume that $\b_{BC_2}$ is the Werner state $\r_w(d,\e-{1\over d})$ with some $\e\in[h,0)$ when $\b_{BC_2}$ is an NPT state.


\end{lemma}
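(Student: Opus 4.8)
The plan is to start from the hypothesis that $\a_{AC_1}\otimes_{K_c}\b_{BC_2}$ is biseparable, writing it in the form \eqref{eq:bisep} as $\d_{ABC}+\e_{ABC}+\z_{ABC}$ with the three summands as in \eqref{eq:definite}. For part (i), I would first show $\z_{ABC}=0$: a term $\proj{\nu_k}_C\otimes\proj{\o_k}_{AB}$ lives in $\cR(\a_{AC_1})\otimes\cR(\b_{BC_2})$, and since the $C=C_1C_2$ split is inherited from $\a$ and $\b$, the reduced state on $C_1$ of each such term is pure, hence rank one; but by assumption $\a$ is entangled of rank two, so its reduced state on $C_1$ has rank two and cannot be built from $C$-separable pieces together with $\d,\e$ unless the $\z$-part vanishes. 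More carefully, project onto $\cR(\a_{AC_1})$ in the $AC_1$ system: the $\o_k$-vectors must be product across $A:B$ with the $AC_1$-part lying in the two-dimensional range of $\a$, and a counting/support argument on system $C_1$ (rank two) forces $\z_{ABC}=0$.

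For part (ii), having $\a_{AC_1}\otimes_{K_c}\b_{BC_2}=\d_{ABC}+\e_{ABC}$, I would use Wootters-type optimal decompositions. Since $\a$ has rank two and is entangled, up to SLOCC on $AC_1$ its range is two-dimensional inside an EB (maximally correlated) subspace, which after local equivalence can be taken as $\lin\{\ket{0,0},\ket{1,1}\}$; the generic rank-two entangled state supported there has the two-parameter form \eqref{eq:sigmaABC} (one pure component plus a rank-one "correction"), which is obtained by diagonalizing and absorbing phases. The form \eqref{eq:deltaABC} of $\d_{ABC}$ follows because the $A$-vector $\ket{\lambda_i}$ of each term, combined with the constraint that the $AC_1$-marginal must fit inside $\cR(\a_{AC_1})=\lin\{\ket{0,0},\ket{1,1}\}$, forces $\ket{\lambda_i,\psi_i}$ to have $AC_1$-part proportional to $\ket{0,0}$ or $\ket{1,1}$; grouping gives the two-term convex combination with weights $f\cos^2\nu,f\sin^2\nu$. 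For \eqref{eq:epsilonABC} the $B$-vectors $\ket{\mu_j}$ give the $\proj{w_j}_B$ factors, and each $\phi_j$ on $AC=AC_1C_2$ must have $AC_1$-part in $\lin\{\ket{0,0},\ket{1,1}\}$, yielding the stated $\cos\xi_j,\sin\xi_j$ superposition; the endpoints $\xi_j\in\{0,\pi/2\}$ are pure product across $AC_1:BC_2$ and can be moved into $\d_{ABC}$, justifying $\xi_j\in(0,\pi/2)$. Equations \eqref{eq:00}--\eqref{eq:0011} are then just the three matrix-entry components (in the $\ket{0,0},\ket{1,1}$ basis of $AC_1$) of the identity $\a_{AC_1}\otimes\b_{BC_2}=\d_{ABC}+\e_{ABC}$.

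For part (iii), the hypothesis $\e_{ABC}^{\G_{AC_1}}=\e_{ABC}$ means $\e_{ABC}$ is PPT across $AC_1:BC_2$; since each summand of \eqref{eq:epsilonABC} is a rank-one operator on $AC_1$ tensored with something on $BC_2$, being PPT-invariant under partial transpose on the two-dimensional $AC_1$ factor forces, by the structure of $2\otimes n$ PPT states (PPT in $2\otimes n$ is separable, or more directly because a partial-transpose-fixed sum of such terms must decompose), that $\e_{ABC}$ is separable across $AC_1:BC_2$ of the stated form \eqref{eq:ac1:bc2}; the range condition $\cR(\b_{BC_2})=\lin\{\ket{\psi_k}\}$ and $r\ge\dim\cR(\b_{BC_2})$ come from comparing with \eqref{eq:00}--\eqref{eq:0011}, and \eqref{eq:wjxj}--\eqref{eq:wjzj} are the term-by-term identifications. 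For part (iv): if $\b_{BC_2}$ is NPT, I would combine \eqref{eq:00}--\eqref{eq:11} (which say $\b_{BC_2}$ is a positive combination of $\b_0,\b_1$ and the pure states $\proj{w_j,x_j},\proj{w_j,y_j}$) with a symmetrization/twirling argument: the constraints are invariant under the group action whose invariant states are Werner states, so $\b$ may be replaced by a Werner state $\r_w(d,p)$ with $p=\e-1/d$; Lemma \ref{le:werner} then pins $\e\in[h,0)$ (the one-copy undistillable/NPT window) for the relevant range of $\e$.

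The main obstacle I expect is part (iii): controlling exactly which rank-one-on-$AC_1$ summands survive the partial-transpose constraint, i.e., showing that PPT-invariance of $\e_{ABC}$ forces each $\phi_j$-term into the real-coefficient form $(\cos\eta_k\ket{0,0}+\sin\eta_k\ket{1,1})\otimes\ket{\psi_k}$ rather than some entangled-across-$AC_1:BC_2$ configuration, and simultaneously matching this with the $\d$-part so that the endpoints $\eta_k\in\{-\pi/2,0,\pi/2\}$ can be absorbed. This is where the interplay between the two-qubit EB structure on $AC_1$ and the arbitrary $BC_2$ system is most delicate, and it is the step that genuinely uses $\rank\a=2$.
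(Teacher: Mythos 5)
Your parts (ii)--(iv) follow essentially the same route as the paper: reduce to $\cR(\a_{AC_1})=\lin\{\ket{0,0},\ket{1,1}\}$ up to local equivalence, use the fact that the only product vectors in this span are $\ket{0,0}$ and $\ket{1,1}$ to force the $AC_1$-parts of the $\d$- and $\e$-terms into the stated forms, read off \eqref{eq:00}--\eqref{eq:0011} as the block components in the $\{\ket{0,0},\ket{1,1}\}$ basis, use $\G_{AC_1}$-invariance to collapse each $\e$-term to a product across $AC_1{:}BC_2$, and twirl an NPT $\b$ to an NPT Werner state. Two points, however, are genuinely off.

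First, in part (i) you set out to prove $\z_{ABC}=0$, and the argument you sketch for it does not work: the $C_1$-marginal of a term $\proj{\nu_k}_C\otimes\proj{\o_k}_{AB}$ is $\tr_{C_2}\proj{\nu_k}$, which is not pure in general, and nothing prevents a biseparable decomposition from containing nonzero $C{:}AB$ terms. The claim $\z_{ABC}=0$ is both stronger than what statement (i) asserts and false in general. What the support argument actually yields is that $\cR(\z_{AC_1})\subseteq\lin\{\ket{0,0},\ket{1,1}\}$ forces each $\ket{\o_k}_{AB}$ to have its $A$-part equal to $\ket{0}$ or $\ket{1}$ (up to phase) with the matching $C_1$-component, so that every $\z$-term is a \emph{fully separable} projector of the form $\proj{0,s,0,\psi}$ or $\proj{1,v,1,w}$; such terms can then simply be relabeled as $\d$-type or $\e$-type terms and absorbed, which is all that (i) requires. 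You need to replace "forces $\z_{ABC}=0$" by this absorption step.

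Second, in part (iii) you justify separability of $\e_{ABC}$ by appealing to "PPT in $2\otimes n$ is separable." That is false for $n\ge4$ (the Horodecki reduction holds only for $2\otimes2$ and $2\otimes3$), and here $n=\dim\cH_{BC_2}$ is arbitrary. The hypothesis is not merely that $\e_{ABC}$ is PPT but that it is \emph{fixed} by $\G_{AC_1}$; the argument must exploit this directly, by writing $\e_{ABC}=\sum_k q_k\proj{\varphi_k}$ with $\ket{\varphi_k}=a_k\ket{0,0}\ket{\psi_k}+b_k\ket{1,1}\ket{\psi'_k}$, comparing the $\ketbra{0,0}{1,1}_{AC_1}$ blocks of $\e_{ABC}$ and $\e_{ABC}^{\G_{AC_1}}$ to conclude $\ketbra{\psi_k}{\psi'_k}=\ketbra{\psi'_k}{\psi_k}$ and hence $\ket{\psi'_k}\propto\ket{\psi_k}$, which gives \eqref{eq:ac1:bc2}. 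You also omit the range argument showing $\cR(\e_{BC_2})=\cR(\b_{BC_2})$ (if the inclusion were proper, sandwiching $\a_{AC_1}\otimes_{K_c}\b_{BC_2}$ with a vector $\ket{a}\in\cR(\b_{BC_2})$ orthogonal to $\cR(\e_{BC_2})$ would exhibit $\a_{AC_1}$ as separable, a contradiction); "comparing with \eqref{eq:00}--\eqref{eq:0011}" does not deliver this on its own.
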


In the next section, we shall apply our results to construct multipartite states having bipartite reduced density operators with additive EOF and whose bipartition is distillable under LOCC.

\section{Applications}
\label{sec:app}

In this section, we apply our results of previous sections to two constructions of multipartite states. First,  we construct a family of multipartite GE states having bipartite reduced density operators with additive EOF. Second, ref.  \cite{Chen2012NONDISTILLABLE} has investigated the distillability of three bipartite reduced density operators from a tripartite pure state. Then we further manage to construct a family of multipartite GE states whose every bipartition is a distillable state under LOCC. 
Such multipartite states are of widely usefulness in quantum-information tasks \cite{Chen2011Multicopy,Chen2012NONDISTILLABLE}.  We present Theorem \ref{thm:EOF:dis} as the main result in this section. 

\begin{theorem}
	\label{thm:EOF:dis}
 We present two constructions (i), (ii) of multipartite GE states.
	
(i) Suppose that $\a_{A_1C_1}^{(1)},\cdots,\a_{A_nC_n}^{(n)}$ are entangled states whose range are EB subspaces. We construct an $(n+1)$-partite state $\r_{A_1\cdots A_nC}=\a_{A_1C_1}^{(1)}\otimes\cdots\otimes\a_{A_nC_n}^{(n)}$, where $C=C_1\cdots C_n$. 

(i.a) For $n=2$,  $\r_{{A_1A_2C}}$ is a tripartite GE state whose three bipartite reduced density operators have additive EOF .

(i.b) When the $(n+1)$-partite state $\r_{{A_1\cdots A_nC}}$ is GE, every bipartite reduced density operator of $\r_{{A_1\cdots A_nC}}$ has additive EOF.    

(i.c) The EOF of every bipartition of the tripartite GE state $\r_{{A_1A_2C}}$ is additive.

(ii) Suppose that $\a_{A_1C_1}^{(1)},\cdots,\a_{A_nC_n}^{(n)}$ are entangled MC states. We construct an $(n+1)$-partite state $\r_{A_1\cdots A_nC}=\a_{A_1C_1}^{(1)}\otimes\cdots\otimes\a_{A_nC_n}^{(n)}$, where $C=C_1\cdots C_n$. 

(ii.a)   For $n=2$, $\r_{{A_1A_2C}}$ is a tripartite GE state whose every bipartition is distillable.

(ii.b)  When the $(n+1)$-partite state $\r_{{A_1\cdots A_nC}}$ is GE, every bipartition of $\r_{{A_1\cdots A_nC}}$ is distillable.   
\end{theorem}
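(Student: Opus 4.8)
My approach is to obtain every item of Theorem~\ref{thm:EOF:dis} directly from the structural lemmas collected in Section~\ref{sec:pr} and from the genuine-entanglement result Theorem~\ref{thm:main}(ii), after decomposing each reduced density operator or bipartite cut of $\r=\bigotimes_{i=1}^{n}\a^{(i)}_{A_iC_i}$ into a tensor factor that is active across the cut in question and a tensor factor that sits entirely on one side of that cut and is therefore only a local ancilla. The genuine-entanglement claims for $n=2$ are immediate: $\r_{A_1A_2C}=\a^{(1)}_{A_1C_1}\otimes\a^{(2)}_{A_2C_2}$ is exactly the state $\a_{AC_1}\otimes\b_{BC_2}$ of Conjecture~\ref{cj:aotimesb} for the pair $(\a^{(1)},\a^{(2)})$, and since the ranges are EB subspaces -- hence, up to local equivalence, subspaces of spans of the form $\{\ket{a_1,1},\dots,\ket{a_n,n}\}$ -- Theorem~\ref{thm:main}(ii) applies; for $n\ge 3$ genuine entanglement is assumed in the hypothesis, so nothing is needed there.

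For the EOF statements of (i) I would run a short case analysis over the cuts. For the bipartite reductions in (i.a) and (i.b): restricting to two $A$-parties gives $\r_{A_iA_j}=\a^{(i)}_{A_i}\otimes\a^{(j)}_{A_j}$, a product of one-party states, hence separable with additive (zero) EOF; restricting to $A_i$ and $C$ gives $\r_{A_iC}=\a^{(i)}_{A_iC_i}\otimes\bigl(\bigotimes_{k\ne i}\a^{(k)}_{C_k}\bigr)$, in which, across the cut $A_i:C$, the first factor has additive EOF by Lemma~\ref{le:eb=additive=eof} and the second lies entirely on the $C$-side, so -- viewing it as a separable bipartite state over a trivial register -- Lemma~\ref{le:addeofOTIMESsep=addeof} shows $\r_{A_iC}$ is entangled with additive EOF. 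For the bipartitions of the tripartite state in (i.c): across $C:A_1A_2$ the state is $\a^{(1)}$ (as a state across $C_1:A_1$) tensored with $\a^{(2)}$ (as a state across $C_2:A_2$), both with additive EOF, so Lemma~\ref{le:eof} applies; across $A_1:A_2C$, and symmetrically across $A_2:A_1C$, the state is $\a^{(1)}$ (as a state across $A_1:C_1$) tensored with $\a^{(2)}$ sitting entirely on the $A_2C$-side, so Lemma~\ref{le:addeofOTIMESsep=addeof} again yields additivity; genuine entanglement in (i.a) and (i.c) is supplied by the first paragraph.

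Part~(ii) reduces to part~(i) together with a distillability argument. Since the range of any maximally correlated state is an EB subspace, as recalled at the start of Section~\ref{sec:ebOXeb}, part~(i) applies verbatim and makes $\r_{A_1A_2C}$ (respectively the GE state $\r_{A_1\cdots A_nC}$) genuinely entangled. For distillability I would invoke the standard fact that an entangled MC state is NPT and distillable: projecting both parties onto a two-dimensional block $\{i,j\}$ carrying a nonzero off-diagonal coefficient yields an NPT state on $\mathbb{C}^2\otimes\mathbb{C}^2$, which is distillable, and distillability lifts back to the original state. Then for a bipartition $S:S^c$ of $\{A_1,\dots,A_n,C\}$ with both parts nonempty, the single party $C$ lies in exactly one of the two parts, so at least one $\a^{(i_0)}$ has its $A_{i_0}$-register in one part and its $C_{i_0}$-register in the other; this $\a^{(i_0)}$ is split nontrivially by $S:S^c$, is distillable across that sub-cut, and an LOCC distillation protocol for the sub-cut is a valid LOCC protocol for $S:S^c$ with the remaining tensor factors left untouched. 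Hence every bipartition of $\r$ is distillable, which gives (ii.a) for $n=2$ (genuine entanglement from (i.a)) and (ii.b) for general $n$.

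I expect no deep obstacle here; the point that recurs and needs care is the bookkeeping step ``a tensor factor lying entirely on one side of a bipartition is a harmless local ancilla'' -- it must be phrased so that it legitimately feeds Lemma~\ref{le:addeofOTIMESsep=addeof} (treating such a factor as a separable bipartite state over a trivial second register), and, on the distillation side, so that an LOCC protocol for a finer cut counts as an LOCC protocol for the coarser cut. The one genuinely external input is that entangled maximally correlated states are distillable, which I would quote rather than reprove.
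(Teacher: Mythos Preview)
Your approach matches the paper's almost exactly: the same decomposition of each reduced state or bipartite cut into an ``active'' tensor factor plus a local ancilla, and the same appeals to Lemmas~\ref{le:eb=additive=eof}, \ref{le:eof}, \ref{le:addeofOTIMESsep=addeof} and Theorem~\ref{thm:main}(ii). One step needs correction: the implication ``range is an EB subspace $\Rightarrow$ up to local equivalence it lies inside $\lin\{\ket{a_1,1},\dots,\ket{a_n,n}\}$'' is not established in the paper and is not true in general (the concluding section explicitly lists \emph{arbitrary} EB spaces as future work). You therefore cannot derive the hypothesis of Theorem~\ref{thm:main}(ii) from the bare EB-range assumption; the paper's own proof of (i.a) quietly narrows to the special form $\lin\{\ket{a_i,i}\}$ before invoking Theorem~\ref{thm:main}(ii), and you should do the same rather than claim it follows.

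For (ii.b) your argument is slightly more direct than the paper's: you locate one factor $\a^{(i_0)}$ that straddles the given cut and run its distillation protocol while the remaining factors sit as local ancillas, whereas the paper first distills \emph{every} $\a^{(j)}$ to a pure entangled state $\ket{\psi_j}$ and then argues that $\bigotimes_j\ket{\psi_j}$ is genuinely entangled, hence a pure entangled state across every bipartition and trivially distillable. Both routes are valid; yours avoids the auxiliary GE argument for the pure tensor product, while the paper's makes the uniform distillability across all cuts visible in a single stroke.
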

\begin{proof}
	(i.a) We construct a family of tripartite GE states, any one of whose three bipartite reduced density operators have additive EOF. As far as we know, this is the first example of such GE states. Let $\r_{A_1A_2C}=\a_{A_1C_1}^{(1)}\otimes_{K_c}\a_{A_2C_2}^{(2)}$, where $\a_{A_1C_1}^{(1)}$ and $\a_{A_2C_2}^{(2)}$ are two entangled states whose ranges are subspaces of EB spaces spanned by $\{\ket{a_1,1},...,\ket{a_n,n}\}$ and $\{\ket{b_1,1},...,\ket{b_m,m}\}$, respectively. It follows from Theorem \ref{thm:main} that $\r_{A_1A_2C}$ is a tripartite GE state. Since $\r_{A_1A_2}$ is a separable, its EOF is zero and additive. Next, Lemma \ref{le:eb=additive=eof}	 implies that $\a_{A_1C_1}^{(1)}$ and $\a_{A_2C_2}^{(2)}$ both have additive EOF. 
	Because $\r_{A_1C}=\a_{A_1C_1}^{(1)}\otimes\a_{C_2}^{(2)}$ and $\r_{A_2C}=\a_{C_1}^{(1)}\otimes\a_{A_2C_2}^{(2)}$, they both have additive EOF in terms of Lemma \ref{le:addeofOTIMESsep=addeof}. We have finish the construction. It means that every bipartite reduced density operator of $\r_{A_1A_2C}$ has equal entanglement cost and EOF. In particular if the EOF is computable, then we can work out the entanglement cost of $\r_{A_1C}$, $\r_{A_2C}$ and $\r_{A_1A_2}$. This case occurs when $\a_{A_1C_1}^{(1)}$ and $\a_{A_2C_2}^{(2)}$ are two-qubit states using the known formula by Wootters \cite{wootters1998}.

(i.b) In Fig. \ref{fig:geeof}, we extend the forementioned tripartite GE states to $(n+1)$-partite GE states. 
Suppose that $\a_{A_1C_1}^{(1)}, \cdots, \a_{A_nC_n}^{(n)}$ are entangled states whose ranges are EB subspaces. We construct the $(n+1)$-partite state $\r_{A_1\cdots A_nC}=\a_{A_1C_1}^{(1)}\otimes_{K_c}\cdots\otimes_{K_c}\a_{A_nC_n}^{(n)}$, where $C=C_1\cdots C_n$. 
We obtain that bipartite reduced density operators of $\r_{A_1\cdots A_nC}$ are $\r_{A_pA_q}$ and $\r_{A_lC}$, where $p,q,l\in\{1,\cdots,n\}$ and $p< q$. One can verify that  $\r_{A_pA_q}=\a_{A_p}^{(p)}\otimes\a_{A_q}^{(q)}$. Since $\r_{A_pA_q}$ is a separable state, its EOF is zero and additive. On the other hand one can verify that $\r_{A_lC}=\a_{A_lC_l}^{(l)}\otimes_{K_c}(\otimes_{j\in\mathcal{M}}\a_{C_j}^{(j)})$ by tracing out remaining systems of $\r_{A_1\cdots A_nC}$, where $l\in\{1,\cdots,n\}$ and $\mathcal{M}=\{1,\cdots,n\}\setminus\{l\}$. Then we have $(\otimes_{j\in\mathcal{M}}\a_{C_j}^{(j)})$ is a separable 
state. One can show that EOF is additive for separable states. Because $\a_{A_lC_l}^{(l)}$ and $(\otimes_{j\in\mathcal{M}}\a_{C_j}^{(j)})$ have both additive EOF, Lemma \ref{le:eof} implies that $\r_{A_lC}$ also has additive EOF.  Hence, every bipartite reduced density operator of $\r_{A_1\cdots A_nC}$ has equal entanglement cost and EOF. 


\begin{figure}[htb]
	\includegraphics[scale=0.6,angle=0]{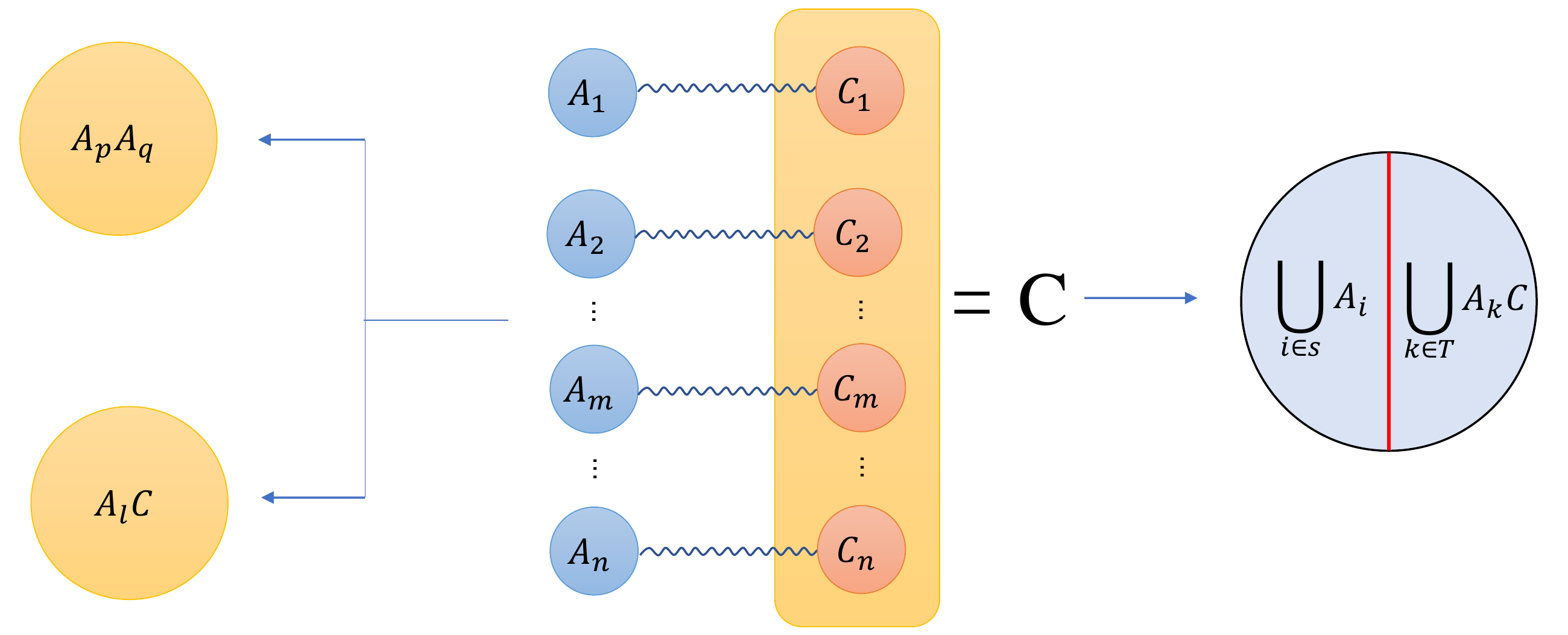}
	\caption{In the middle of this figure, every $\a_{A_jC_j}^{(j)}$ is an entangled state whose range is an EB subspace. Suppose an $(n+1)$-partite state $\r_{A_1\cdots A_nC}:={\a}_{A_1C_1}^{(1)}\otimes_{K_c}\cdots\otimes_{K_c}{\a}_{A_nC_n}^{(n)}$, where $C=C_1\cdots C_n$. One can verify that $\a_{A_jC_j}^{(j)}$'s all have additive EOF from Lemma \ref{le:eb=additive=eof}. The state $\r_{A_1\cdots A_nC}$ has bipartite reduce density operators $\r_{A_pA_q}$ and $\r_{A_lC}$ on the left side of this figure, where $p,q,l\in\{1,\cdots,n\}$ and $p<q$. Then every bipartite reduce density operator has additive EOF. On the other hand, we may regard that $\r_{A_1\cdots A_nC}$ is a bipartite state of the two systems $\cup_{i\in\mathcal{S}}A_i$ and $\cup_{k\in\mathcal{T}}A_kC$ on the right side, where $\mathcal{S}\cup\mathcal{T}=\{1,\cdots,n\}$ and $\mathcal{S}\cap\mathcal{T}=\emptyset$.  We obtain that every bipartite reduced density operator of the $(n+1)$-partite GE state $\r_{A_1\cdots A_nC}$ has additive EOF.} 
	\label{fig:geeof}
\end{figure}

(i.c) Suppose that two bipartite entangled states $\a_{A_1C_1}^{(1)}$ and $\a_{A_2C_2}^{(2)}$ both have additive EOF. We construct a tripartite state $\r_{A_1A_2C}=\a_{A_1C_1}^{(1)}\otimes_{K_c}\a_{A_2C_2}^{(2)}$, where $C=C_1C_2$. Then we regard $\r_{A_1A_2C}$ as a bipartite state $\r_{A_1A_2:C}$ in  $\mathcal{B}(\mathcal{H}_{A_1A_2}\otimes\mathcal{H}_{C})$. 
From Lemma \ref{le:eof} and the definition of additive EOF, we obtain that %
the EOF of $\r_{{A_1A_2:C}}$ is additive. Furthermore, if we regard $\r_{A_1A_2C}$ as a bipartite state $\r_{{A_1:A_2C}}$, then we obtain that
\begin{eqnarray}
\label{eq:eof:a:bc}
E_f(\r_{A_1:A_2C}\otimes\s_{BD})
&=&
E_f(\a_{A_1C_1}^{(1)}\otimes\s_{BD})\nonumber\\
&=&
E_f(\a_{A_1C_1}^{(1)})+E_f({\s_{BD}})\nonumber\\
&=&
E_f(\r_{A_1:A_2C})+E_f({\s_{BD}}).
\end{eqnarray}

So the bipartite state $\r_{A_1:A_2C}$ has additive EOF. Similarly, the bipartite state $\r_{A_2:A_1C}$ has additive EOF. Hence, from Theorem \ref{thm:main}, we obtain that $\r_{A_1A_2C}$ is a tripartite GE state whose every bipartition has additive EOF.

(ii.a) The distillable entanglement of a multipartite mixed state heavily evaluates the usefulness of this state to quantum computing and teleportation. Recently, Ref. \cite{1809.04202} established the bidistillable subspace in which every multipartite state is distillable in terms of any bipartition of this state, by constructing the so-called unextendible biseparable bases. Here we present another construction of multipartite state whose bipartition is distillable.   
We begin by studying the tripartite system. We recall that the entangled maximally correlated (MC) state is distillable under LOCC \cite{Horodecki2007Quantum}. Further, one can verify that the range of MC states is an EB subspace spanned by $\{\ket{a_1,1},...,\ket{a_n,n}\}$. Suppose $\a_{AC_1}$ and $\b_{BC_2}$ are two entangled MC states, and $\r_{ABC}=\a_{AC_1}\otimes_{K_c}\b_{BC_2}$. Then $\r_{AC}$ and $\r_{BC}$ are both distillable. We obtain that $\r_{ABC}$ is a tripartite GE state such that any bipartition of this state is distillable.

(ii.b) In Fig. \ref{fig:gemc}, we can extend the example to $(n+1)$-partite GE states whose every bipartition generates a distillable state. Suppose that $\a_{A_1C_1}^{(1)},\cdots,\a_{A_nC_n}^{(n)}$ are entangled MC states, and an $(n+1)$-partite state $\r_{A_1\cdots A_nC}:=\a_{A_1C_1}^{(1)}\otimes_{K_c}\cdots\otimes_{K_c}\a_{A_nC_n}^{(n)}$, where $C=C_1\cdots C_n$. Because $\a_{A_jC_{j}}^{(j)}$'s are entangled MC states, it implies that $\a_{A_jC_{j}}^{(j)}$'s are distillable under LOCC. Then one can asymptotically distill pure entangled states $\ket{\psi_j}_{A_jC_j}$'s from $\a_{A_jC_j}^{(j)}$'s, respectively. Then it follows that $\ket{\psi_1}_{A_1C_1}\otimes_{K_c}\cdots\otimes_{K_c}\ket{\psi_n}_{A_nC_n}$ is an $(n+1)$-partite GE state. Otherwise, if $\ket{\psi_1}_{A_1C_1}\otimes_{K_c}\cdots\otimes_{K_c}\ket{\psi_n}_{A_nC_n}$ is not GE, then we obtain that its density matrix $\r'_{A_1\cdots A_nC}$ is not GE.  
We may assume that $\r'_{A_1\cdots A_nC}$ is a bipartite separable state on systems $\cup_iA_i$ and $\cup_kA_kC$, where $i\in\mathcal{S}, k\in\mathcal{T}$ and $\mathcal{S}\cup\mathcal{T}=\{1,\cdots,n\}, \mathcal{S}\cap\mathcal{T}=\emptyset$. Denote $m$ as the maximum element in $\mathcal{S}$. Then $\r'_{A_m|C_m}$ is a bipartite separable state by tracing out remaining systems of $\r'_{A_1\cdots A_nC}$ and $\r'_{A_m|C_m}\propto\proj{\psi_s}_{A_mC_m}$. It is a contradition with the fact that $\ket{\psi_m}_{A_mC_m}$ is a pure entangled state. So $\r'_{A_1\cdots A_nC}$ is a bipartite pure entangled state in terms of every bipartition. Hence, every bipartition is distillable. We obtain that $\r_{A_1\cdots A_nC}$ is an $(n+1)$-partite GE state such that any bipartition of this state is distillable. 

\begin{figure}[htb]
	\includegraphics[scale=0.7,angle=0]{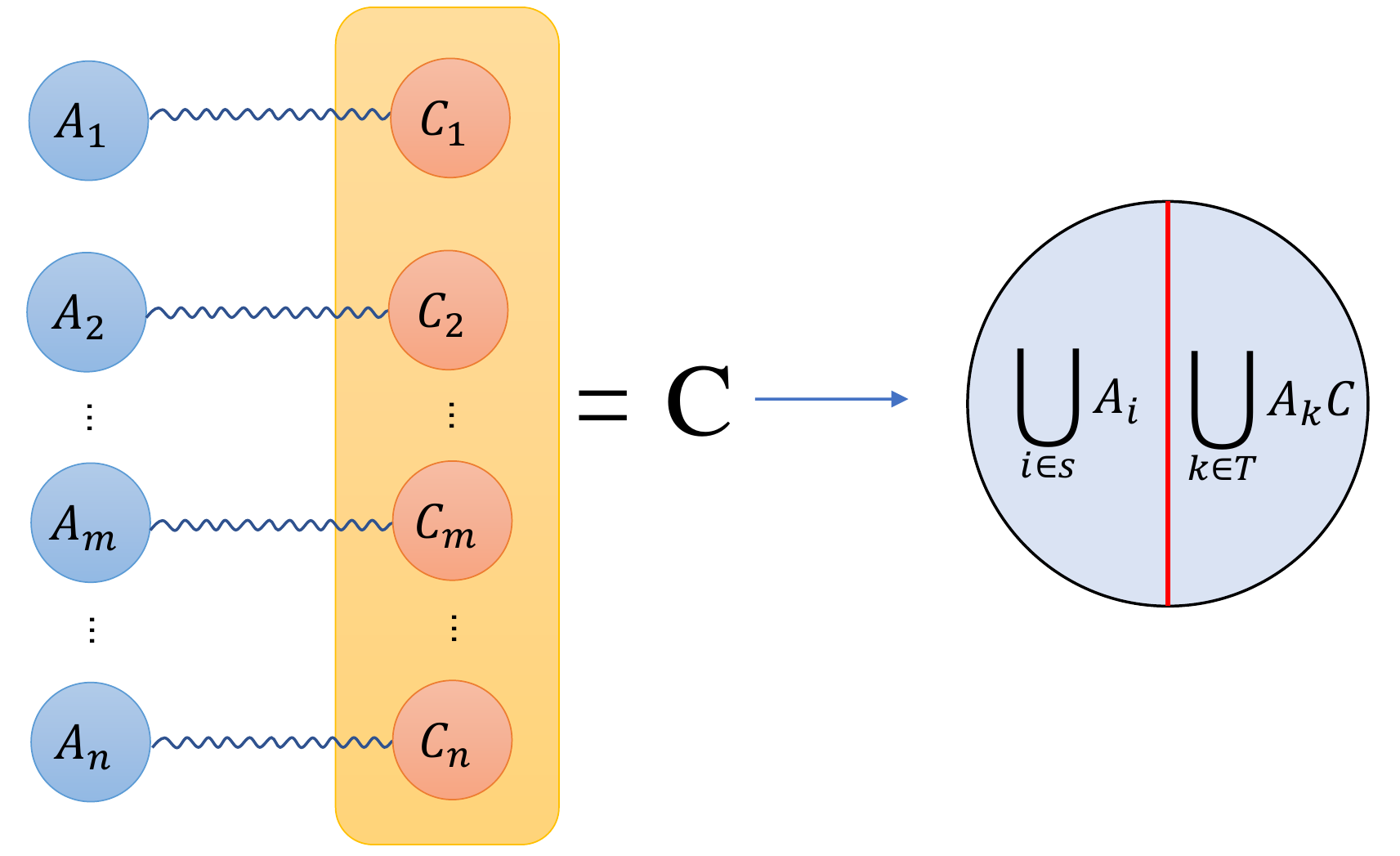}
	\caption{Every $\a_{A_jC_j}^{(j)}$ is an entangled MC state and it can be asymptotically distilled into a pure entangled state $\ket{\psi_j}_{A_jC_j}$, $1\leq j\leq n$. Suppose the $(n+1)$-partite state $\r'_{A_1\cdots A_nC}$ is the density matrix of $\ket{\psi_1}_{A_1C_1}\otimes_{K_c}\cdots\otimes_{K_c}\ket{\psi_n}_{A_nC_n}$. On the left side of this figure, suppose that $\r'_{A_1\cdots A_nC}$ is a GE state of the systems $A_1, \cdots, A_n, C$. We may regard that it is a bipartite state of the two systems $\cup_{i\in\mathcal{S}}A_i$ and $\cup_{k\in\mathcal{T}}A_kC$ on the right side of this figure, where $\mathcal{S}\cup\mathcal{T}=\{1,\cdots,n\}$ and $\mathcal{S}\cap\mathcal{T}=\emptyset$.  We obtain that $\a_{A_1C_1}\otimes_{K_c}\cdots\otimes_{K_c}\a_{A_nC_n}$ is an $(n+1)$-partite GE state such that any bipartition of this state is distillable.  
	} 
	\label{fig:gemc}
\end{figure}




\end{proof}

Moreover, there exists a relation between additive EOF and distillability. We may assume that a tripartite state $\r_{A_1A_2C}:=\a_{A_1C_1}^{(1)}\otimes_{K_c}\a_{A_2C_2}^{(2)}$, where  $\a_{A_1C_1}^{(1)},\a_{A_2C_2}^{(2)}$ are entangled MC states and $C=C_1C_2$. One can verify that the range of MC states is an EB subspace spanned by $\{\ket{a_1,1},\cdots,\ket{a_n,n}\}$. From Theorem \ref{thm:main}, the tripartite state $\r_{A_1A_2C}$ is GE. Above the two constructions of multipartite states in Theorem \ref{thm:EOF:dis}, (i.a) and (ii.a) show that any bipartition of $\r_{A_1A_2C}$ is distillable and has additive EOF.




\section{Multipartite genuine entanglement}
\label{sec:multiGE}

In this section, we explore more ways of constructing multipartite GE states by extending Conjecture \ref{cj:aotimesb}. We start by studying the tripartite case. In particular, we construct Conjecture \ref{cj:aotimesbotimesc=sep}, and explain its relations to Conjecture \ref{cj:aotimesb} in Lemma \ref{thm:bi->tri}. 


\begin{figure}[htb]
 	\includegraphics[scale=0.8,angle=0]{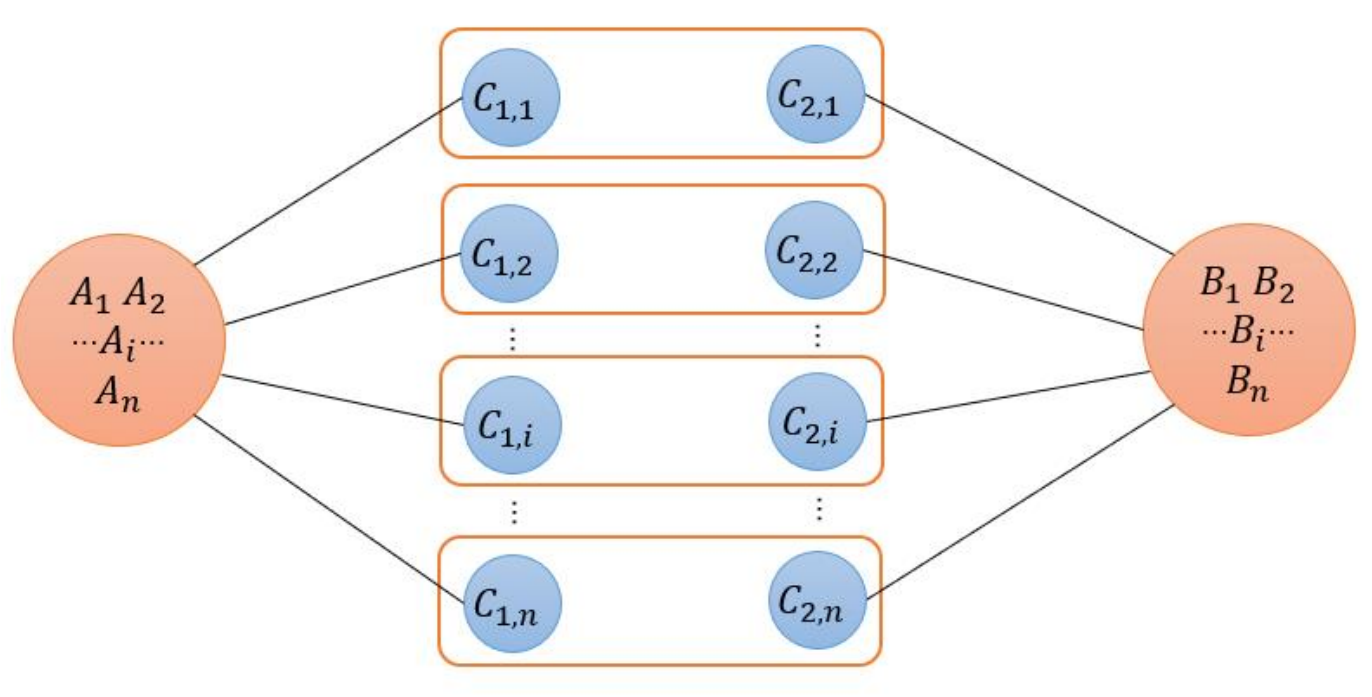}
 	\caption{If $\a_{AC_{1,1}C_{1,2}\cdots C_{1,n}}, \b_{BC_{2,1}C_{2,2}\cdots C_{2,n}}$ are two $(n+1)$-partite GE states on systems $A, C_{1,1},\cdots C_{1,n}$ and $B, C_{2,1},\cdots C_{2,n}$, then is the $(n+2)$-partite state  $\r_{ABC_1\cdots C_n}$  also a GE state, where $\r_{ABC_1\cdots C_n}=\a\otimes\b$,   $A=A_1A_2\cdots A_n$, $B=B_1B_2\cdots B_n$ and $C_j=C_{1,j}C_{2,j},1\leq j\leq n$?}
 	\label{fig:GMEn+2}
 \end{figure}
 
\begin{figure}[htb]	\includegraphics[scale=1,angle=0]{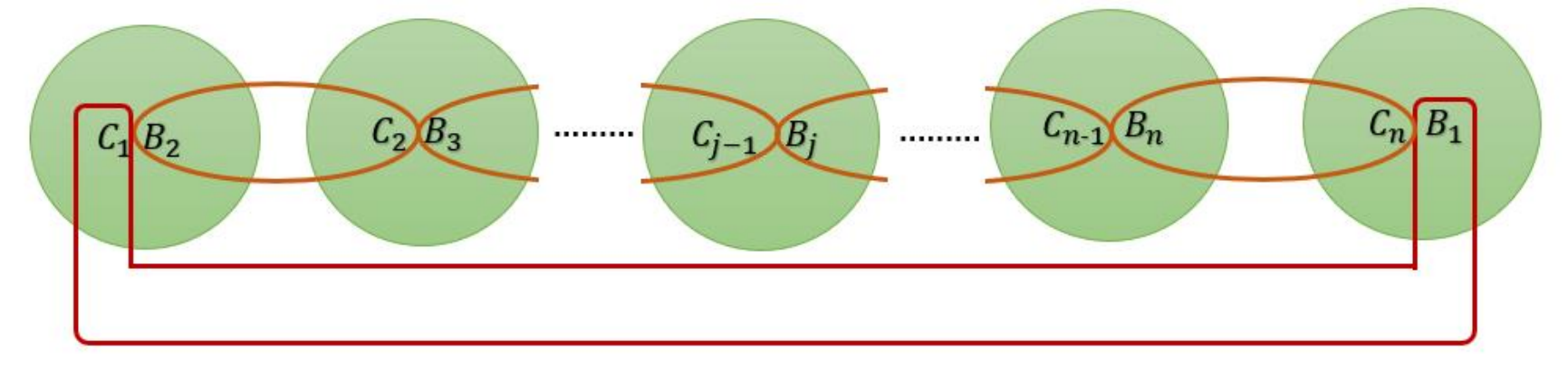}
	\caption{If $\a_{C_1B_2}^{(1)}, \cdots,\a_{C_{n}B_{1}}^{(n)}$ are bipartite entangled states, then is the $n$-partite state $\r_{A_1A_2\cdots A_n}$ also a GE state, where $\r_{A_1A_2\cdots A_n}=$ $\a_{C_1B_2}^{(1)}\otimes\cdots\otimes\a_{C_{n}B_{1}}^{(n)}$} and $A_j=B_jC_j$ ?
	\label{fig:binary2}
\end{figure}	

\begin{figure}[htb]
\includegraphics[scale=1.0,angle=0]{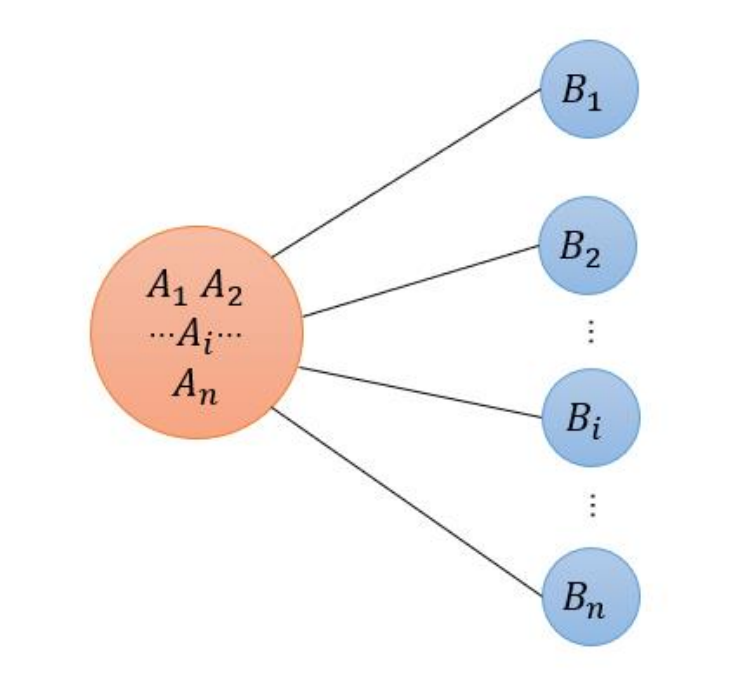}
	\caption{If $\a_{A_iB_i}^{(i)}$ is a GE state on systems $A_i$ and $B_i$, $i=1,\cdots,n$, then is the $(n+1)$-partite state  $\r_{AB_1\cdots B_n}$  also a GE state, where $\r_{AB_1\cdots B_n}=\a_{A_1B_1}^{(1)}\otimes\cdots\a_{A_nB_n}^{(n)}$ and  $A=A_1A_2\cdots A_n$? }
	\label{fig:GE123}
\end{figure}

\begin{conjecture}
\label{cj:aotimesbotimesc=sep}
We present five statements.

(i) In Fig. \ref{fig:GMEn+2}, we construct the $(n+2)$-partite state
\begin{eqnarray}
\label{eq:n+2}	
\r_{ABC_1...C_n}=\a_{AC_{1,1}C_{1,2}\cdots C_{1,n}}\otimes_{K_c} \b_{BC_{2,1}C_{2,2}\cdots C_{2,n}}
\end{eqnarray}
where $C_j=C_{1,j}C_{2,j}$ for $1\leq j\leq n$, and $\a_{AC_{1,1}C_{1,2}\cdots C_{1,n}}$ and $\b_{BC_{2,1}C_{2,2}\cdots C_{2,n}}$ are two   $(n+1)$-partite states. If they are both GE states, then $\r_{ABC_1...C_n}$ is GE.

(ii) Suppose $\a_{A_1B_1}$ and $\b_{A_2C_1}$ are two bipartite entangled states, $\g_{B_2C_2}$ is a bipartite state, and $\r_{ABC}=\a_{A_1B_1}\otimes\b_{A_2C_1}\otimes\g_{B_2C_2}$ is a tripartite state with $A=A_1A_2$, $B=B_1B_2$ and $C=C_1C_2$. If $\g$ is separable then $\r_{ABC}$ is GE.

(iii) Using the same notation in (ii), if $\g_{B_2C_2}$ is entangled then $\r_{ABC}$ is GE.

(iv) In Fig. \ref{fig:binary2}, we construct the $n$-partite state
\begin{eqnarray}
\label{eq:npartite3}
\r_{A_1...A_n}
=
\a^{(1)}_{C_1B_2}\otimes	
\a^{(2)}_{C_2B_3}\otimes	
\a^{(3)}_{C_3B_4}\otimes
...
\otimes
\a^{(n-1)}_{C_{n-1}B_n}\otimes	
\a^{(n)}_{C_nB_1},
\end{eqnarray}
where $A_j=B_jC_j$ and $\a^{(j)}$ is a bipartite state for $1\le j\le n$. If $\a^{(j)}$ is entangled, then $\r_{A_1...A_n}$ is GE.

(v) In Fig. \ref{fig:GE123}, if  
the $n$-partite state 
\begin{eqnarray}
\label{eq:ab1...bn}	
\r_{ABB_3\cdots B_n}=\a_{A_1B_1}^{(1)}\otimes\a_{A_2B_2}^{(2)}\otimes\cdots\otimes\a_{A_nB_n}^{(n)}
\end{eqnarray}
is GE, where $A=A_1A_2\cdots A_n$ and $B=B_1B_2$, then $\a_{A_jB_j}^{(j)}$ is entangled. Furthermore, if  $\a_{A_jB_j}^{(j)}$'s are entangled, then  $\r_{AB_1\cdots B_n}$ is also an $(n+1)$-partite GE state.   


\end{conjecture}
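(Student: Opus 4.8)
The plan is to read all five statements as one problem---deciding genuine entanglement of a tensor product of bipartite entangled states in which prescribed subsystems are merged into shared parties---and to attack the sufficiency directions by a local-projection reduction to rank-two factors followed by the explicit biseparability analysis of Lemma~\ref{le:rank2}, rather than by a network of mutual implications. The necessary directions come first and are elementary. For the ``only if'' half of (v), in the grouping $\r_{AB_1\cdots B_n}$ (the one in Fig.~\ref{fig:GE123} and in the sufficiency half), suppose some $\a^{(j)}_{A_jB_j}$ is separable and write $\a^{(j)}=\sum_k p_k\proj{u_k}_{A_j}\otimes\proj{v_k}_{B_j}$; since $B_j$ then carries its own party, every term of the induced decomposition is product across the cut $B_j\,|\,(\text{rest})$, so $\r$ is biseparable. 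This forces each $\a^{(j)}$ to be entangled.

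Two sufficiency claims reduce transparently to Conjecture~\ref{cj:aotimesb}. In (ii) the separability of $\g_{B_2C_2}$ lets me write $\g=\sum_k p_k\proj{x_k}_{B_2}\otimes\proj{y_k}_{C_2}$, so that $\r_{ABC}=(\a_{A_1B_1}\otimes\b_{A_2C_1})\otimes_K(\text{local decoration on }B_2,C_2)$ with a fully separable decoration; Lemma~\ref{le:rdmge}(ii) then gives that $\r_{ABC}$ is GE iff $\a_{A_1B_1}\otimes\b_{A_2C_1}$ is GE on the three parties $A=A_1A_2,\,B=B_1,\,C=C_1$, which is exactly Conjecture~\ref{cj:aotimesb} with the hub $A$ in the role of $C$. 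For (iii), with $\g$ entangled, a local projection of $\g$ onto a product vector $\ket{x}_{B_2}\ket{y}_{C_2}$ of its range---available outside degenerate cases---leaves $\a,\b$ untouched and reduces to (ii); the degenerate case in which $\cR(\g)$ is spanned by entangled vectors is folded into the rank-two base case below.

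For the genuinely multipartite sufficiency claims---the ring (iv), the ``if'' half of (v) for $n\ge3$, and the merge (i)---the engine is that a local operator $\bigotimes_i M_i$ sends biseparable states to biseparable states, so if some local projection of $\r$ is GE then $\r$ is GE. Applying the projection scheme from the proof of Theorem~\ref{thm:main} factor by factor, I replace every entangled bipartite factor by an entangled state of rank at most two (or one supported on a two-dimensional EB subspace), reducing each claim to a base case of rank-two factors. The dependence on the number of factors is then removed by induction: peel off one spoke $\a^{(j)}$, invoke the inductive hypothesis on the smaller merged state, and re-attach the spoke---when its free endpoint lands on an existing party the lift is immediate from the tracing argument behind Lemma~\ref{le:rdmge}(i), and when it opens a new party the step is a three-party instance (new leaf, hub, remaining parties) supplied by the tripartite base case. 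Statement (i), whose factors are themselves multipartite, enters the same induction with base $n=1$ equal to Conjecture~\ref{cj:aotimesb}.

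The hard part will be the tripartite base case, which is Conjecture~\ref{cj:aotimesb} and drives all of (i)--(v) through the induction. The rank-two reduction does not by itself deliver GE: Lemma~\ref{le:rank2} shows that, once $\a$ has rank two, biseparability of $\a_{AC_1}\otimes_{K_c}\b_{BC_2}$ can persist only if $\b_{BC_2}$ is pinned to the Werner form $\r_w(d,\e-\tfrac1d)$ with $\e\in[h,0)$, i.e.\ into the regime of Lemma~\ref{le:werner}. Completing the proof therefore amounts to showing that the coupled constraints \eqref{eq:00}--\eqref{eq:wjzj} are incompatible with both $\a$ and $\b$ entangled---a distillability-flavoured consistency problem. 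I expect this Werner/undistillability step to be the genuine obstacle: settling it would close the tripartite base case and, via the local-projection reduction and the induction above, the sufficiency halves of all five statements at once.
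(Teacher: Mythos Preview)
The central issue is that Conjecture \ref{cj:aotimesbotimesc=sep} is exactly that---a conjecture. The paper does not prove any of the five sufficiency claims; what it proves is only Lemma \ref{thm:bi->tri}, a web of one-way implications among parts (i)--(v) and Conjecture \ref{cj:aotimesb} (namely (i) $\Rightarrow$ Conjecture \ref{cj:aotimesb} $\Rightarrow$ (iii), (v) $\Rightarrow$ Conjecture \ref{cj:aotimesb} $\Leftrightarrow$ (ii), and (iv) $\Rightarrow$ (iii)), obtained simply by specialising $n$ or by tracing out the extra factor $\g_{B_2C_2}$. Your plan instead attempts to prove the statements outright, and you correctly identify that everything funnels into the tripartite base case Conjecture \ref{cj:aotimesb}; but you then concede that this base case is open, resting on an unproved ``Werner/undistillability step'' extracted from Lemma \ref{le:rank2}. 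So your proposal is not a proof either; at best it is a reduction scheme in the spirit of Lemma \ref{thm:bi->tri}, only asserting the \emph{reverse} implications (Conjecture \ref{cj:aotimesb} $\Rightarrow$ (i),(iv),(v)) that the paper deliberately does not claim.

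Two further concrete gaps in what you do claim. First, your ``only if'' argument for (v) works in the $(n{+}1)$-partite grouping $\r_{AB_1\cdots B_n}$, where each $B_j$ is its own party; but the necessity clause in (v) is stated for the $n$-partite state $\r_{ABB_3\cdots B_n}$ with $B=B_1B_2$, and $(n{+}1)$-partite biseparability does not imply $n$-partite biseparability after merging $B_1,B_2$. Second, your induction for (i), (iv), (v)---``peel off one spoke, invoke the hypothesis, re-attach''---is not justified: if $\r$ were biseparable and you trace out the peeled systems $A_n,B_n$, the terms in the biseparable decomposition whose cut was $B_n\mid(\text{rest})$ need not remain product, so you cannot conclude the smaller state is biseparable and reach a contradiction. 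The paper's Lemma \ref{thm:bi->tri} sidesteps this by claiming only the trivial direction (the multipartite conjecture specialises at small $n$ to the tripartite one), not the converse you need.
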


The motivation of this conjecture is to generate more multipartite GE states using bipartite states. We refer to Conjecture \ref{cj:aotimesbotimesc=sep} (v) as \textit{satellite mode generation of multipartite GE state} when it holds. In particular, one can show that the $n$-partite state $\r_{ABB_3\cdots B_n}$ is GE if the $(n+1)$-partite state $\r_{AB_1\cdots B_n}$ is GE   in Conjecture \ref{cj:aotimesbotimesc=sep} (v).  To characterize the relations between Conjecture \ref{cj:aotimesb} and \ref{cj:aotimesbotimesc=sep}, we show the following observation in Fig. \ref{fig:connections}. 

\begin{figure}[htb]
	\includegraphics[scale=1.0,angle=0]{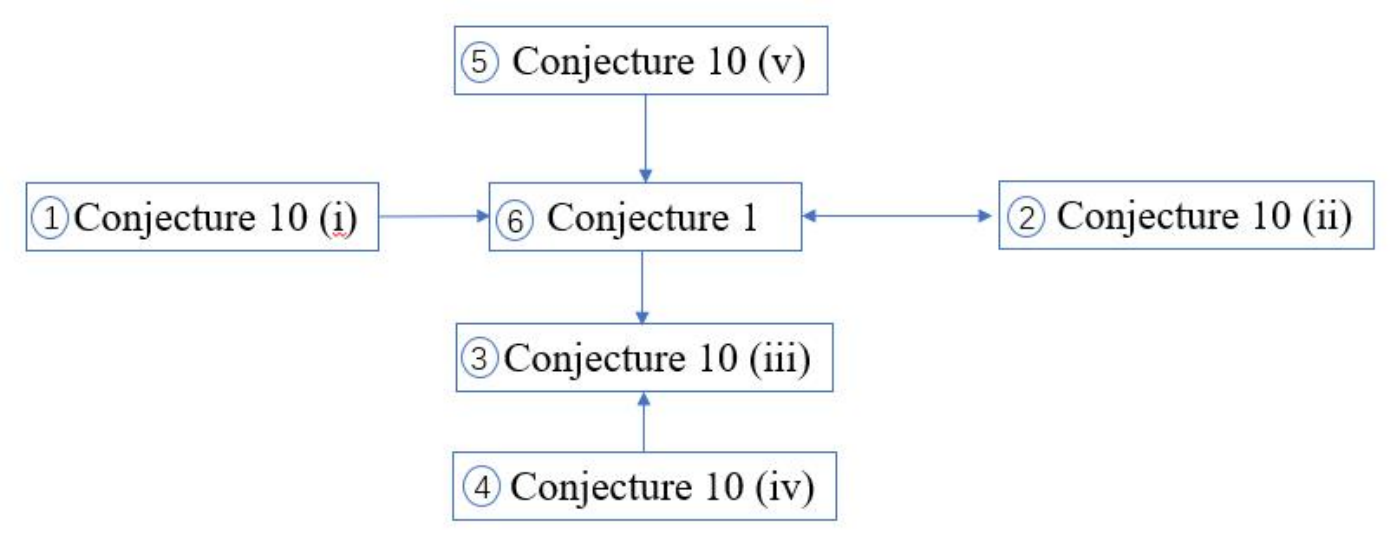}
	\caption{By investigating  the connections in Conjecture \ref{cj:aotimesbotimesc=sep}, we denote \textcircled{1}, $\cdots$, \textcircled{5} as Conjecture \ref{cj:aotimesbotimesc=sep} (i) $\cdots$, Conjecture \ref{cj:aotimesbotimesc=sep} (v) in order and \textcircled{6} as Conjecture \ref{cj:aotimesb}, respectively. There  exist three connections \textcircled{1} $\rightarrow$ \textcircled{6} and \textcircled{6} $\rightarrow$ \textcircled{3}, \textcircled{5} $\rightarrow$ \textcircled{6} and  \textcircled{6} $\leftrightarrow$  \textcircled{2} and \textcircled{4} $\rightarrow$  \textcircled{3}  at present. Then the connections are shown in Theorem \ref{thm:bi->tri}.}
	\label{fig:connections}
\end{figure}

\begin{lemma}
\label{thm:bi->tri}	

(i) Conjecture \ref{cj:aotimesbotimesc=sep} (i)  $\rightarrow$ Conjecture \ref{cj:aotimesb} $\rightarrow$ Conjecture \ref{cj:aotimesbotimesc=sep} (iii).

(ii) Conjecture \ref{cj:aotimesbotimesc=sep} (v) $\rightarrow$ Conjecture \ref{cj:aotimesb} $\leftrightarrow$  Conjecture \ref{cj:aotimesbotimesc=sep} (ii).

(iii) Conjecture \ref{cj:aotimesbotimesc=sep} (iv) $\rightarrow$ Conjecture \ref{cj:aotimesbotimesc=sep} (iii).








\end{lemma}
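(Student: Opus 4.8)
\textit{Overall strategy.} The plan is to obtain every arrow by recognising one instance of Conjecture~\ref{cj:aotimesbotimesc=sep} either as Conjecture~\ref{cj:aotimesb} after relabeling, or as something that reduces to it, using two tools throughout: Lemma~\ref{le:kc} (to identify $\a\otimes\b$ with $\a\otimes_{K_c}\b$ in the bipartite case) and Lemma~\ref{le:rdmge}(ii) (to attach or remove a fully separable Kronecker factor without changing whether a state is GE, biseparable, or fully separable). I will also use the elementary fact that a nonzero local product projection $P=P_A\otimes P_B\otimes P_C$ sends a biseparable state to a biseparable state, because it maps each pure bipartite-product term across a cut to a pure bipartite-product term across the same cut.

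\textit{Part (i).} For Conjecture~\ref{cj:aotimesbotimesc=sep}(i)$\to$Conjecture~\ref{cj:aotimesb} I specialise (i) to $n=1$: then $\a_{AC_{1,1}}$ and $\b_{BC_{2,1}}$ are bipartite states, for which ``GE'' means ``entangled'', and by Lemma~\ref{le:kc} the state $\a_{AC_{1,1}}\otimes_{K_c}\b_{BC_{2,1}}$ coincides, after renaming $C_{1,1},C_{2,1}$ as $C_1,C_2$, with $\a_{AC_1}\otimes\b_{BC_2}$, so this case of (i) is Conjecture~\ref{cj:aotimesb} verbatim. For Conjecture~\ref{cj:aotimesb}$\to$Conjecture~\ref{cj:aotimesbotimesc=sep}(iii), assume Conjecture~\ref{cj:aotimesb} and suppose for contradiction that $\r_{ABC}=\a_{A_1B_1}\otimes\b_{A_2C_1}\otimes\g_{B_2C_2}$ is biseparable into parties $A,B,C$. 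Since $\g\neq0$ there is a product vector $\ket{e}_{B_2}\ket{f}_{C_2}$ with $\bra{e,f}\g\ket{e,f}>0$; projecting with $P=I_{A_1A_2B_1C_1}\otimes\proj{e}_{B_2}\otimes\proj{f}_{C_2}$ leaves a nonzero multiple of $\a_{A_1B_1}\otimes\b_{A_2C_1}\otimes\proj{e}_{B_2}\otimes\proj{f}_{C_2}$, which is still biseparable on $\{A,B,C\}$. But Conjecture~\ref{cj:aotimesb}, with the central party $A=A_1A_2$ playing the role of $C=C_1C_2$, gives that $\a_{A_1B_1}\otimes\b_{A_2C_1}$ is GE on $\{A_1A_2,B_1,C_1\}$, and Lemma~\ref{le:rdmge}(ii) (attaching the fully separable $\proj{e}_{B_2}\otimes\proj{f}_{C_2}$, with $B_2$ joined to $B$ and $C_2$ to $C$) then makes the projected state GE --- a contradiction. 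Hence $\r_{ABC}$ is GE. Since only $\g\neq0$ was used, the same argument also proves Conjecture~\ref{cj:aotimesbotimesc=sep}(ii).

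\textit{Parts (ii) and (iii).} The arrow Conjecture~\ref{cj:aotimesbotimesc=sep}(v)$\to$Conjecture~\ref{cj:aotimesb} is the ``furthermore'' clause of (v) at $n=2$: for entangled $\a^{(1)}_{A_1B_1},\a^{(2)}_{A_2B_2}$ the state $\a^{(1)}_{A_1B_1}\otimes\a^{(2)}_{A_2B_2}$ is GE on $\{A_1A_2,B_1,B_2\}$, which is Conjecture~\ref{cj:aotimesb} up to relabeling. For Conjecture~\ref{cj:aotimesb}$\to$Conjecture~\ref{cj:aotimesbotimesc=sep}(ii): in the setup of (ii) the state $\g$ is separable hence fully separable, so Lemma~\ref{le:rdmge}(ii) gives that $\r_{ABC}$ is GE iff $\a_{A_1B_1}\otimes\b_{A_2C_1}$ is GE on $\{A_1A_2,B_1,C_1\}$, which holds by Conjecture~\ref{cj:aotimesb} exactly as above. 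For the converse Conjecture~\ref{cj:aotimesbotimesc=sep}(ii)$\to$Conjecture~\ref{cj:aotimesb}: given entangled bipartite $\mu_{PQ_1},\nu_{RQ_2}$, introduce fresh systems $E,F$ and a product state $\tau_{EF}$, and set $\r_{ABC}=\mu_{PQ_1}\otimes\nu_{RQ_2}\otimes\tau_{EF}$ with $A=Q_1Q_2$, $B=PE$, $C=RF$; this is a valid instance of (ii) ($\a^{(ii)}=\mu$, $\b^{(ii)}=\nu$ entangled, $\g^{(ii)}=\tau$ separable), so (ii) makes it GE, and removing the fully separable $\tau$ by Lemma~\ref{le:rdmge}(ii) yields that $\mu_{PQ_1}\otimes\nu_{RQ_2}$ is GE on $\{P,R,Q_1Q_2\}$, i.e.\ Conjecture~\ref{cj:aotimesb}. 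Finally, Conjecture~\ref{cj:aotimesbotimesc=sep}(iv)$\to$Conjecture~\ref{cj:aotimesbotimesc=sep}(iii) is the specialisation $n=3$: both statements assert that a $3$-cycle of three entangled bipartite states over three parties, each party decomposed into precisely the two atomic systems carried by its two incident states, is GE, and one matches them by the dictionary $A_1^{(iv)}\mapsto A$, $A_2^{(iv)}\mapsto B$, $A_3^{(iv)}\mapsto C$, together with $\a^{(1)}\mapsto\a$, $\a^{(2)}\mapsto\g$, $\a^{(3)}\mapsto\b$ and the induced identification of atomic systems.

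\textit{Main obstacle.} There is no deep analytic step; the real content is the careful bookkeeping of systems versus parties, plus the observation that a separable (or projected-to-product) factor can always be attached or stripped through Lemma~\ref{le:rdmge}(ii). The one place needing genuine care is Conjecture~\ref{cj:aotimesb}$\to$Conjecture~\ref{cj:aotimesbotimesc=sep}(iii): here $\g$ is entangled and cannot be removed directly, so it must be killed by a local projection onto a product vector (which exists only because $\g\neq0$) and the conclusion reached by contradiction; and in the triangle-to-$3$-cycle matching of part (iii) one should verify that each of the three parties really factors as the tensor product of exactly the two atomic systems appearing in its two incident bipartite states.
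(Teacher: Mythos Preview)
Your proof is correct and follows the same overall architecture as the paper: each arrow is obtained either by specialisation (setting $n=1,2,3$) or by stripping away the extra factor $\g$ to reduce to Conjecture~\ref{cj:aotimesb}. Two technical choices differ. For Conjecture~\ref{cj:aotimesb}$\to$Conjecture~\ref{cj:aotimesbotimesc=sep}(iii) the paper simply traces out $B_2,C_2$ (so that a biseparable decomposition of $\r_{ABC}$ descends to one of $\a_{A_1B_1}\otimes\b_{A_2C_1}$), whereas you project onto a product vector in $B_2C_2$ and then invoke Lemma~\ref{le:rdmge}(ii); both work, but the partial-trace route is shorter and avoids the need to exhibit a product vector. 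For Conjecture~\ref{cj:aotimesb}$\leftrightarrow$Conjecture~\ref{cj:aotimesbotimesc=sep}(ii) the paper attempts to rewrite $\r_{ABC}$ as a single tensor product $\a'_{A_1B}\otimes\b'_{A_2C}$ after expanding the separable $\g$; that step is not valid for a general separable (non-product) $\g$, so your use of Lemma~\ref{le:rdmge}(ii) to add or remove the fully separable Kronecker factor is in fact the cleaner and correct argument here.
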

\begin{proof}
(i)   If $n=1$ then Conjecture \ref{cj:aotimesbotimesc=sep} (i) reduces to Conjecture \ref{cj:aotimesb}. We have proven the assertion. 

Suppose $\a_{A_1B_1}, \b_{A_2C_1}, \g_{B_2C_2}$ are three bipartite entangled states. Suppose 
$
\r_{ABC}
=
\a_{A_1B_1}\otimes\b_{A_2C_1}\otimes\g_{B_2C_2}
$, 
where $A=A_1A_2,B=B_1B_2$ and $C=C_1C_2$.  
Because  $\a_{A_1B_1}$ and $\b_{A_2C_1}$ are bipartite entangled states, when Conjecture \ref{cj:aotimesb} holds, we obtain that $\r'_{AB_1C_1}:=\a_{A_1B_1}\otimes \b_{A_2C_1}$ is a GE state. Then we rewrite $\r_{ABC}=\r'_{AB_1C_1}\otimes\g_{B_2C_2}$. If $\r_{ABC}$ is not a GE state, it implies that $\r_{ABC}$ is a biseparable state. We write $\r_{ABC}$ in \eqref{eq:bisep}. By tracing out systems $B_2, C_2$, we obtain  that $\r'_{AB_1C_1}$ is a tripartite biseparable state on systems $A, B_1$ and $C_1$. It is a contradiction with the fact that $\r'_{AB_1C_1}$ is a GE state. Hence, $\r_{ABC}$ is a GE state, and the Conjecture \ref{cj:aotimesbotimesc=sep} (iii) holds.

(ii)  If $n=2$ then the Conjecture \ref{cj:aotimesbotimesc=sep} (v) reduces to Conjecture \ref{cj:aotimesb}. Hence, all results for Conjecture \ref{cj:aotimesb} apply to Conjecture \ref{cj:aotimesbotimesc=sep} (v). Then we assume $\g_{B_2C_2}=\sum_i\ketbra{a_i,b_i}{a_i,b_i}$. Because $\g_{BC_2}$ is a bipartite separable state, we obtain 
\begin{eqnarray}
\label{eq:equ1}
\r_{ABC}
&=&
\a_{A_1B_1}\otimes\b_{A_2C_1}\otimes\g_{B_2C_2}\nonumber
\\
&=&
\sum_i(\a_{A_1B_1}\otimes\ketbra{a_i}{a_i}_{B_2}\otimes\b_{A_2C_1}\otimes\ketbra{b_i}{b_i}_{C_2})\nonumber
\\
&=&
\a'_{A_1B}\otimes\b'_{A_2C},
\end{eqnarray}	
where $\a'_{A_1B}$ and $\b'_{A_2C}$ are bipartite entangled states. 
From \eqref{eq:equ1}, then Conjecture \ref{cj:aotimesbotimesc=sep} (ii) is equivalent to Conjecture \ref{cj:aotimesb}. 

(iii) If $n=3$ then Conjecture \ref{cj:aotimesbotimesc=sep} (iv) reduces to Conjecture \ref{cj:aotimesbotimesc=sep} (iii). We have proven the assertion.

\end{proof}

\section{Conclusions}
	\label{sec:con}
We have investigated the tripartite state as the tensor product of two bipartite entangled states by merging two systems. We have shown that the tripartite state is a genuinely entangled state when the range of both bipartite states are entanglement-breaking subspaces. We also have investigated the tripartite state when the one of the two bipartite states has rank two. Further, we have constructed a family of multipartite GE states having bipartite reduced density operators with additive EOF, and another multipartite GE states whose every bipartition gives rise to a distillable state under LOCC. 
Moreover, we explored more ways of constructing multipartite genuinely entangled states. 

The next target is to investigate Conjecture \ref{cj:aotimesb} with the bipartite state whose range is an arbitrary EB space, e.g., the $2$-dimensional EB spaces in $\bbC^2\otimes\bbC^d$ constructed recently \cite{Zhao_2019}.  Further, an open problem is whether Conjecture \ref{cj:aotimesb} holds when the range of the bipartite state is a subspace of the EB space spaned by $\{\ket{a_i,i}\}_{i=1}^n$. Another direction is whether every bipartition of $(n+1)$-partite GE states has additive EOF. 

\section*{Acknowledgments}
	\label{sec:ack}	
Authors were supported by the  NNSF of China (Grant No. 11871089), and the Fundamental Research Funds for the Central Universities (Grant No. ZG216S2005).

\appendix

\section{Proof of Lemma \ref{le:rank2}}
\label{app:le:bisep}

By reviewing Lemma \ref{le:rank2}, we show the proof of Lemma \ref{le:rank2} from (i) to (iv).

\begin{proof}
	(i) From Lemma \ref{le:2r2oxr2}, we assume that $\a_{AC_1}$ is spanned by pure biseparable states. Since $\a_{AC_1}$ is a bipartite entangled state of rank two, then we assume that $\mathcal{R}(\a)$ is spanned by $\{\ket{a_1,b_1}, \ket{a_2,b_2}\}$. It implies that $\ket{a_1}, \ket{a_2}$ are linearly independent, and $\ket{b_1}, \ket{b_2}$ are linearly independent. Then we can find two invertible matrices $X,  Y$ such that
	\begin{eqnarray}
	\label{eq:equivalent}
	(X\otimes Y)\ket{a_1,b_1}=\ket{0,0},\quad (X\otimes Y)\ket{a_2,b_2}=\ket{1,1}.
	\end{eqnarray}
	So rank $\a=2$ implies that $\mathcal{R}(\a)$ is spanned by $\ket{0,0}$ and $\ket{1,1}$ up to local equivalence.  Furthermore $\cR(\d), \cR(\e), \cR(\z)\subseteq\a\otimes_{K_c}\b$. 
	Because $\mathcal{R}(\z_{AC_1})\subseteq\mathcal{R}(\a)$, by tracing out systems $B, C_2$, we obtain that $\z_{AC_1}$ is a separable state spanned by $\ket{0,0}$ and $\ket{1,1}$ in \eqref{eq:definite}. 
	From the definition of $\z_{ABC}$ in \eqref{eq:definite},  we can rewrite $\z_{ABC}=\sum_ic_i \ketbra{0,s_i,0,\psi_{i}}{0,s_i,0,\psi_{i}}_{AB(C_1C_2)}+\sum _jm_j\ketbra{1,v_j,1,w_j}{1,v_j,1,w_j}_{AB(C_1C_2)}$. 
	Then by tracing out systems $A,C_1$, we obtain that $\z_{BC_2}=\sum c_{i}\ketbra{s_i,\psi_i}{s_i,\psi_i}+\sum m_j\ketbra{v_j,w_j}{v_j,w_j}$, where $\ket{s_i,\psi_i},\ket{v_j,w_j}$ are also separable states. 
	Thus, $\z_{ABC}$ is a separable state on systems $A, B, C_1, C_2$. Then we can merge $\z_{ABC}$ with the part $\d_{ABC}$ or $\e_{ABC}$. Hence, $\a_{AC_1}\otimes_{K_c}\b_{BC_2}=\d_{ABC}+\e_{ABC}$.

	(ii) From Lemma \ref{le:2r2oxr2} and \eqref{eq:equivalent} in (i), up to local equivalence on systems $A$ and $C_1$, the condition rank $\a_{AC_1}=2$ implies that
	\begin{eqnarray}
	\label{eq:rangeAC1}
	\mathcal{R}(\a_{AC_1})=\lin\{\ket{0,0},\ket{1,1}\}.
	\end{eqnarray}
	Because $\a_{AC_1}$ is a bipartite entangled state of rank two, by normalizing $\a_{AC_1}$,  \eqref{eq:sigmaABC} holds.
	From \eqref{eq:definite}, we have
	\begin{eqnarray}
	\label{eq:daltaABC12}
	\d_{ABC}=\sum_i\ketbra{\l_i}{\l_i}_A\otimes\ketbra{\psi_i}{\psi_i}_{BC}.
	\end{eqnarray}
	By tracing out systems $B,C_2$,  we may assume that
	\begin{eqnarray}
	\label{eq:daltaAC1}
	\d_{AC_1}:=\tr_{BC_2}\d_{ABC}=\sum_i\proj{\l_i}_A\otimes(\eta_i)_{C_1}, 
	\end{eqnarray}
	where we denote $\tr_{BC_2}\ketbra{\psi_i}{\psi_i}_{BC}:=(\eta_i)_{C_1}$. 
	From the definition \eqref{eq:bisep}, we have $\cR(\d_{ABC})\subseteq \cR(\a_{AC_1}\otimes_{K_c}\b_{BC_2})$. We obtain that $\cR(\d_{AC_1})\subseteq\cR(\a_{AC_1})$ by tracing out systems $B, C_2$.  Eqs. \eqref{eq:rangeAC1} and \eqref{eq:daltaAC1} imply that the product vectors in $\cR(\delta_{AC_1})$ are $\ket{0,0}$ and $\ket{1,1}$. Then we have
	\begin{eqnarray}
	\label{eq:xy}
	\ket{x,y}:=a\ket{0,0}+b\ket{1,1}\in\cR(\d_{AC_1})
	\end{eqnarray}
	if and only if $ ab=0$, where $a,b$ are two complex numbers.
	From \eqref{eq:daltaAC1}, we have $\ket{\l_i}\otimes\cR(\eta_i)\subseteq\cR(\d_{AC_1})$. Then we have $\ket{\l_i}\otimes\ket{y}\in \cR(\d_{AC_1})$, where $\ket{y}\in\cR(\eta_i)$. If rank $\eta_i>1$,  then there exists $\ket{z}\in\cR(\eta_i)$ such that $\ket{z}$ is linearly independent with $\ket{y}$. Then we obtain that 
	$
	k_1\ket{\l_i, y}+k_2\ket{\l_i,z}\in \cR(\d_{AC_1}),
	$
	where $k_1,k_2$ are any complex numbers. So the range of $\delta_{AC_1}$ has infinitely many product vectors that are pairwise linearly independent. It is contradiction with \eqref{eq:xy}. So rank $\eta_i=1$. From the definition of $\d_{ABC}$ in \eqref{eq:definite} and rank $\eta_i=1$, 
	we obtain that $\ket{\psi_i}_{BC}$ is a bipartite separable state on systems $BC_2$ and $C_1$. Then we denote 
	\begin{eqnarray}
	\label{eq:psibc}
	\ket{\psi_i}_{BC}:=\ket{\phi_i}_{BC_2}\otimes \ket{b_i}_{C_1}.
	\end{eqnarray}
	
	Substituting \eqref{eq:psibc} into \eqref{eq:daltaABC12}, 
	we obtain that  
	$
	\d_{ABC}
	=
	\sum_i\proj{\l_i,b_i}_{AC_1}\otimes\proj{\phi_i}_{BC_2}.
	$
	From \eqref{eq:xy}, the $\ket{\l_i,b_i}$'s are proportional to product vectors $\ket{0,0}$ or $\ket{1,1}$. We rewrite  
	$
	\d_{ABC}=\sum_i\ketbra{0,0}{0,0}_{AC_1}\otimes\ketbra{\phi'_i}{\phi'_i}_{BC_2}+\sum_j\ketbra{1,1}{1,1}_{AC_1}\otimes\ketbra{\ph_j}{\ph_j}_{BC_2},
	$
	where $\ket{\phi'_i}, \ket{\ph_j}$ are proportional to the elements in $\ket{\phi_i}$'s.
	Let $\b_0=\sum_i\ketbra{\phi'_i}{\phi'_i}$ and  $\b_1=\sum_j\ketbra{\ph_j}{\ph_j}$. Then we have
	$
	\d_{ABC}=\ketbra{0,0}{0,0}_{AC_1}\otimes(\b_0)_{BC_2}+\ketbra{1,1}{1,1}_{AC_1}\otimes(\b_1)_{BC_2}.
	$
	By normalizing $\d_{ABC}$, we have 
	\begin{eqnarray}
	\label{eq:dalta2}
	\d_{ABC}=\cos^2\nu\ketbra{0,0}{0,0}_{AC_1}\otimes(\b_0)_{BC_2}+\sin^2\nu\ketbra{1,1}{1,1}_{AC_1}\otimes(\b_1)_{BC_2}, 
	\end{eqnarray}
	where $\nu\in[0,\pi/2]$. If $\tr\a_{AC_1}\otimes\b_{BC_2}=1$, from (i), there exists $f\in(0,1)$ such that
	\begin{eqnarray}
	\label{eq:f}
	\a_{AC_1}\otimes_{K_c}\b_{BC_2}=f\d_{ABC}+(1-f)\e_{ABC}.
	\end{eqnarray}
	So we rewrite \eqref{eq:dalta2} 
	as 
	\eqref{eq:deltaABC}. 
	
	Next, from \eqref{eq:definite}, we have 
	\begin{eqnarray}
	\label{eq:epsilon2}
	\e_{ABC}=\sum_{j}\proj{\mu_j}_B\otimes\proj{\phi_j}_{AC}.
	\end{eqnarray}
	
	In \eqref{eq:epsilon2}, 
	we may assume that $\ket{\phi_j}_{AC}$ is a bipartite pure state on systems $AC_1$ and $C_2$. Because $\mathcal{R}(\e_{ABC})\subseteq\cR(\a_{AC_1}\otimes_{K_c}\b_{BC_2})$, we have $\cR(\e_{AC_1})\subseteq\cR(\a_{AC_1})$. By using Schmidt decompostion, we have 
	\begin{eqnarray}
	\label{eq:mj}
	\ket{\phi_j}_{AC}=a_j\ket{0,0}_{AC_1}\ket{x_j}_{C_2}+b_j\ket{1,1}_{AC_1}\ket{y_j}_{C_2},
	\end{eqnarray}
	where $a_j,b_j$ are complex numbers and $\ket{x_j}, \ket{y_j}$ are states on system $C_2$. By normalizing \eqref{eq:epsilon2} and \eqref{eq:mj},  we rewrite \eqref{eq:epsilon2} as follows,
	\begin{eqnarray}
	\label{eq:epABC}
	\e_{ABC}&=&\sum_j p_j\proj{w_j}_B\otimes(\cos\xi_j\ket{0,0}_{AC_1}\ket{x_j}_{C_2}\nonumber
	\\
	&+&
	\sin\xi_j\ket{1,1}_{AC_1}\ket{y_j}_{C_2})
	(\cos\xi_j\bra{0,0}_{AC_1}\bra{x_j}_{C_2}+\sin\xi_j\bra{1,1}_{AC_1}\bra{y_j}_{C_2}),
	\end{eqnarray}
	where $\sum_j p_j=1, \xi_j\in[0,\pi/2]$. If $\x_j=0$ or $\pi/2$, then we can merge the states in $\d_{ABC}$. So we remove the two ends $\x_j=0$ and $\p/2$.
	From \eqref{eq:f}, we rewrite  \eqref{eq:epABC} as 
	\eqref{eq:epsilonABC}.
	Comparing \eqref{eq:sigmaABC}, \eqref{eq:deltaABC} and \eqref{eq:epsilonABC},  we obtain that \eqref{eq:00}, \eqref{eq:11}, \eqref{eq:0011} hold.
	
	(iii) Suppose that
	\begin{eqnarray}
	\label{eq:eabcabc}
	\e_{ABC}=\sum_{k=1}^rq_k\proj{\varphi_k}, \quad q_k>0,\quad \sum_k q_k=1,
	\end{eqnarray}
	and $r$ is the rank of $\e_{ABC}$. Because $\cR(\e_{ABC})\subseteq\cR(\a_{AC_1}\otimes_{K_c}\b_{BC_2})$, we have $\cR(\e_{AC_1})\subseteq\cR(\a_{AC_1})$. Eq. \eqref{eq:rangeAC1} implies that  $\cR(\e_{AC_1})=\lin\{\ket{0,0},\ket{1,1}\}$. By using Schmidt decomposition on systems $A,C_1$ and $B,C_2$, we may assume 
	\begin{eqnarray}
	\label{eq:akbk}
	\ket{\varphi_k}:=a_k\ket{0,0}_{AC_1}\ket{\psi_k}_{BC_2}+b_k\ket{1,1}_{AC_1}\ket{\psi'_k}_{BC_2},
	\end{eqnarray}
	where $a_k,b_k$ are complex numbers and $\ket{\psi_k},\ket{\psi'_k}$ are unit vectors. Substituting \eqref{eq:akbk} into \eqref{eq:eabcabc}, we have
	\begin{eqnarray}
	\label{eq:eabca}
	\e_{ABC}&=&\sum_{k=1}^rq_k(a_k\ket{0,0}_{AC_1}\ket{\psi_k}_{BC_2}+b_k\ket{1,1}_{AC_1}\ket{\psi'_k}_{BC_2})(a_k^*\bra{0,0}_{AC_1}\bra{\psi_k}_{BC_2}\nonumber
	\\
	&+&
	b^*_k\bra{1,1}_{AC_1}\bra{\psi'_k}_{BC_2}).
	\end{eqnarray}
	By normalizing \eqref{eq:eabca}, we rewrite 
	\begin{eqnarray}
	\label{eq:eabcunit}
	\e_{ABC}&=&\sum_{k=1}^rq_k(\cos\xi_k\ket{0,0}_{AC_1}\ket{\psi_k}_{BC_2}+\sin\xi_k\ket{1,1}_{AC_1}\ket{\psi'_k}_{BC_2})(\cos\xi_k\bra{0,0}_{AC_1}\bra{\psi_k}_{BC_2}\nonumber
	\\
	&+&
	\sin\xi_k\bra{1,1}_{AC_1}\bra{\psi'_k}_{BC_2}),
	\end{eqnarray}
	where $\xi_k\in[0,\pi/2]$.
	Then we have 
	\begin{eqnarray}
	\label{eq:abceplison}
	\e_{ABC}^{\Gamma_{AC_1}}&=&\sum_{k=1}^rq_k(\cos^2\xi_k\proj{0,0}_{AC_1}\otimes\proj{\psi_k}_{BC_2}\nonumber\\
	&+&\cos\xi_k\sin\xi_k\ketbra{1,1}{0,0}_{AC_1}\otimes\ketbra{\psi_k}{\psi'_k}_{BC_2}\nonumber
	\\
	&+&
	\sin\xi_k\cos\xi_k\ketbra{0,0}{1,1}_{AC_1}\otimes\ketbra{\psi'_k}{\psi_k}_{BC_2}\nonumber
	\\
	&+&
	\sin^2\xi_k\ketbra{1,1}{1,1}_{AC_1}\otimes\proj{\psi'_k}_{BC_2}
	).
	\end{eqnarray}
	
	If $\e_{ABC}=\e_{ABC}^{\Gamma_{AC_1}}$, by comparing \eqref{eq:eabcunit} and \eqref{eq:abceplison}, then we have 
	\begin{eqnarray}
	\label{eq:psipsi}
	\ketbra{\psi_k}{\psi'_k}_{BC_2}=\ketbra{\psi'_k}{\psi_k}_{BC_2}.
	\end{eqnarray}
	Eq. \eqref{eq:psipsi} implies that $\ket{\psi_k}_{BC_2}$ is proportional to $\ket{\psi'_k}_{BC_2}$. We may assume that 
	\begin{eqnarray}
	\label{eq:eabcmk}
	\ket{\psi'_k}:=m_k\ket{\psi_k},
	\end{eqnarray}
	where $m_k$ is a complex number. Substituting \eqref{eq:eabcmk} into \eqref{eq:abceplison}, we have
	\begin{eqnarray}
	\label{eq:eplisonabc3}
	\e_{ABC}^{\Gamma_{AC_1}}
	&=&
	\sum_{k=1}^rq_k(m_k\cos\xi_k\ket{0,0}+\sin\xi_k\ket{1,1})(m_k\cos\xi_k\bra{0,0}\nonumber\\
	&+&\sin\xi_k\bra{1,1})\otimes\proj{\psi_k}_{BC_2}.
	\end{eqnarray}
	By normalizing \eqref{eq:eplisonabc3}, we have 
	\begin{eqnarray}
	\label{eq:abcabcd}
	\e_{ABC}^{\Gamma_{AC_1}}
	&=&
	\sum_{k=1}^rq_k(\cos\eta_k\ket{0,0}+\sin\eta_k\ket{1,1})(\cos\eta_k\bra{0,0}\nonumber\\
	&+&\sin\eta_k\bra{1,1})\otimes\proj{\psi_k}_{BC_2},
	\end{eqnarray}
	where $\eta_k\in[-\pi/2,\pi/2]$. If $\eta_k=0, \pi/2$ or $-\pi/2$, then we can merge the states in $\d_{ABC}$. So we remove $\eta_k=0, \pi/2$ and $-\pi/2$. From \eqref{eq:f}, we rewrite \eqref{eq:abcabcd} as \eqref{eq:ac1:bc2}. By comparing \eqref{eq:epsilonABC} and \eqref{eq:ac1:bc2}, we obtain that \eqref{eq:wjxj}, \eqref{eq:wjyj}, \eqref{eq:wjzj} hold.

	By tracing out systems $A,C_1$ in \eqref{eq:ac1:bc2}, we have $\e_{BC_2}=(1-f)\sum_{k}q_k\proj{\psi_k}_{BC_2}$. By definition $\cR(\e_{BC_2})=\lin\{\ket{\psi_k}\}_{k=1}^r$. Because $\cR(\e_{ABC})\subseteq\a_{AC_1}\otimes_{K_c}\b_{BC_2}$, we have $\cR(\e_{BC_2})\subseteq\cR(\b_{BC_2})$. If the inclusion is proper, then there exists a nonzero vector $\ket{a}\in\cR(\b_{BC_2})$ and $\ket{a}\perp\cR(\e_{BC_2})$. Meanwhile, $\ket{a}$ is not orthogonal to $\d_{BC_2}$. Then we obtain that 
	\begin{eqnarray}
	\a_{AC_1}\propto\bra{a}(\a_{AC_1}\otimes_{K_c}\b_{BC_2})\ket{a}
	\end{eqnarray}
	is a bipartite separable state. It is contradiction with the fact that $\a_{AC_1}$ is entangled. So we have $\cR(\b_{BC_2})=\cR(\e_{BC_2})=\lin\{\ket{\psi_k}\}_{k=1}^r$. 
	From \eqref{eq:ac1:bc2}, we have $r=\rank \e_{ABC}\geq\dim\cR(\e_{BC_2})$. So we obtain that $r=\rank \e_{ABC}\geq\dim\cR(\b_{BC_2})$.
	
	
	(iv) It's known that each NPT bipartite state can be convert into an NPT Werner state by using LOCC \cite{DiVincenzo_2000,D_r_2000}. From Lemma \ref{le:werner}, if $\b_{BC_2}\in\mathbb{C}^d\otimes\mathbb{C}^d$ is an NPT state, then there exists an LOCC operator $\Delta$ such that $\Delta(\b_{BC_2})=\r_w(d,p)$, where $p\in[-1,-\frac{1}{d})$. So we may assume that $\b_{BC_2}$ is the Werner state $\r_w(d,\e-\frac{1}{d})$ with some $\e\in[h,0)$, where $[h,0)$ is a neighborhood.
	
\end{proof}

\bibliographystyle{unsrt}

\bibliography{20200609_conjecture_on_AC1_otimes_BC2}

\end{document}